\documentclass[11pt]{amsart}

\usepackage[a4paper,margin=1in]{geometry}
\usepackage{amssymb}
\usepackage[foot]{amsaddr}
\usepackage{setspace}
\usepackage{xcolor}
\definecolor{Navy}{RGB}{30, 70, 235}
\definecolor{Green}{RGB}{0, 128, 0}
\IfFormatAtLeastTF{2026-06-01}{}{\usepackage{aliascnt}}
\usepackage[bookmarks=true,hypertexnames=false,pagebackref]{hyperref}
\hypersetup{colorlinks=true, citecolor=Green, linkcolor=Navy, urlcolor=blue}
\usepackage{prettyref}
\usepackage{todonotes}

\newtheorem{theorem}{Theorem}

\IfFormatAtLeastTF{2026-06-01}{
  \newtheorem{definition}[theorem]{Definition}
  \newtheorem{proposition}[theorem]{Proposition}
  \newtheorem{lemma}[theorem]{Lemma}
  \newtheorem{corollary}[theorem]{Corollary}
  \newtheorem{observation}[theorem]{Observation}
  \newtheorem{claim}[theorem]{Claim}
}{
  \newaliascnt{definition}{theorem}
  \newtheorem{definition}[definition]{Definition}
  \aliascntresetthe{definition}

  \newaliascnt{proposition}{theorem}
  \newtheorem{proposition}[proposition]{Proposition}
  \aliascntresetthe{proposition}

  \newaliascnt{lemma}{theorem}
  \newtheorem{lemma}[lemma]{Lemma}
  \aliascntresetthe{lemma}

  \newaliascnt{corollary}{theorem}
  \newtheorem{corollary}[corollary]{Corollary}
  \aliascntresetthe{corollary}

  \newaliascnt{observation}{theorem}
  
  \aliascntresetthe{observation}

  \newaliascnt{claim}{theorem}
  
  \aliascntresetthe{claim}
}

\theoremstyle{remark}
\IfFormatAtLeastTF{2026-06-01}{
  \newtheorem{remark}[theorem]{Remark}
}{
  \newaliascnt{remark}{theorem}
  
  \aliascntresetthe{remark}
}
\newtheorem*{remark*}{Remark}

\makeatletter
\renewcommand{\prettyref}[1]{\expandafter\pr@dispatch#1:\@nil}
\def\pr@dispatch#1:#2:\@nil{%
  \@ifundefined{pr@#1}{%
    \PackageWarning{prettyref}{Reference format #1\space undefined}%
    \ref{#1:#2}%
  }{%
    \@nameuse{pr@#1}{#1:#2}%
  }%
}
\makeatother

\newrefformat{thm}{\hyperref[#1]{Theorem~\ref*{#1}}}
\newrefformat{prop}{\hyperref[#1]{Proposition~\ref*{#1}}}
\newrefformat{lem}{\hyperref[#1]{Lemma~\ref*{#1}}}
\newrefformat{cor}{\hyperref[#1]{Corollary~\ref*{#1}}}
\newrefformat{claim}{\hyperref[#1]{Claim~\ref*{#1}}}
\newrefformat{def}{\hyperref[#1]{Definition~\ref*{#1}}}
\newrefformat{sec}{\hyperref[#1]{Section~\ref*{#1}}}
\newrefformat{subsec}{\hyperref[#1]{Section~\ref*{#1}}}
\newrefformat{eq}{\hyperref[#1]{Equation~\textup{(\ref*{#1})}}}
\newrefformat{eqn}{\hyperref[#1]{Equation~\textup{(\ref*{#1})}}}
\newcommand{\Cref}[1]{\prettyref{#1}}

\newcommand{\Ex}{\mathop{\mathbb{{}E}}\nolimits}
\renewcommand{\Pr}{\mathop{\mathrm{Pr}}\nolimits}
\newcommand{\abs}[1]{\ensuremath{\left\vert#1\right\vert}}

\def\TV{d_{\mathrm{TV}}}
\def\Tmix{T_{\mathrm{mix}}}
\def\mono{\operatorname{mono}}
\def\Lift{\operatorname{Lift}}
\def\eps{\varepsilon}
\def\defeq{:=}
\newcommand{\NP}{\textnormal{\textbf{NP}}}

\title[An $\widetilde{O}(n^2)$-Time Sampler for Zero-Field Ferromagnetic Ising Models]{An $\widetilde{O}\left(n^2 \right)$-Time Sampler for Zero-Field Ferromagnetic Ising Models}
\author{Weiming Feng}
\address[Weiming Feng]{School of Computing and Data Science, The University of Hong Kong, Hong Kong, China. Email: \texttt{wfeng@hku.hk}. Weiming Feng acknowledges the support of ECS grant 27202725 from Hong Kong RGC.}
\author{Heng Guo}
\address[Heng Guo]{School of Informatics, University of Edinburgh, Edinburgh, United Kingdom. Email: \texttt{hguo@inf.ed.ac.uk}.}
\author{Yiyao Zhang}
\address[Yiyao Zhang]{State Key Laboratory for Novel Software Technology, New Cornerstone Science Laboratory, Nanjing University, Nanjing, China. Email: \texttt{zhangyiyao@smail.nju.edu.cn}.}
\date{}

\begin{document}

\begin{abstract}
  We give an approximate sampler for ferromagnetic Ising models with no field on arbitrary graphs that runs in time $\widetilde O(m+n)+\widetilde O_\beta(n^2\log^2 (1 / \eps))$, where $n$ and $m$ are the numbers of vertices and edges, respectively, and $\eps$ is the approximation error.
  Our approach combines Bencz\'ur--Karger cut sparsification with a new mixing time analysis of the Glauber dynamics for the random-cluster model. 
  The mixing time analysis features a new monotone edge-count Poincar\'e inequality.
\end{abstract}

\maketitle

\section{Introduction}

The Ising model is perhaps the most studied and most well-known statistical physics model.
It features both a simple description and a rich mathematical structure.
The main computational task for the Ising model is to sample from its Gibbs distribution, which we define next.

Let $G:=(V,E)$ be a connected undirected (multi)graph.
Let $\beta>0$ be a parameter describing the interaction strength.
For any configuration $\sigma\in\{0,1\}^V$, define the number of monochromatic edges as $\mono(\sigma):=|\{uv\in E:\sigma_u=\sigma_v\}|$, where parallel edges are counted with multiplicities.
Let $Z^{\mathrm{Ising}}_\beta:=\sum_{\sigma\in\{0,1\}^V}\beta^{\mono(\sigma)}$ be the so-called partition function.
The \emph{Gibbs distribution} of the zero-field Ising model on $G$ with interaction parameter $\beta$ is defined by
\begin{align*}
  \forall \sigma\in\{0,1\}^V,\quad
  \mu^{\mathrm{Ising}}_{\beta,G}(\sigma)
  :=\frac{\beta^{\mono(\sigma)}}{Z^{\mathrm{Ising}}_\beta}.
\end{align*}
When $\beta>1$, the system is called ferromagnetic, and anti-ferromagnetic if $0 < \beta < 1$.

The computational complexity of sampling from $\mu^{\mathrm{Ising}}_{\beta,G}$ has been a classic topic in theoretical computer science.
The pioneering work of Jerrum and Sinclair \cite{JS93} showed that a polynomial-time approximate sampler\footnote{The original paper \cite{JS93} does not give a sampler explicitly, but it is not hard to obtain one, as shown by Randall and Wilson \cite{RW99}.} exists for ferromagnetic Ising models, even with consistent external fields,\footnote{Roughly speaking, external fields are vertex weights. In this paper we focus on the zero-field case, and thus do not introduce them formally.} whereas the problem becomes \NP{}-hard for anti-ferromagnetic Ising models.

For anti-ferromagnetic Ising models, we have by now fully understood the computational phase transition regarding the parameter $\beta$ and the maximum degree \cite{LLY13,SST14,SS14,GSV16,CJMYZ26}.
In the uniqueness regime, the running time of the sampler has been improved to near-linear \cite{CLV23,AJKPV22,CFYZ22}.
However, on the ferromagnetic side, algorithmic progress is relatively sparse.
The only running time improvement is obtained for the non-zero field case, where Chen and Zhang \cite{CZ23} introduced a near-linear time sampling algorithm.
For the zero-field case, prior to this work, the fastest sampler was still the Jerrum--Sinclair algorithm \cite{JS93}, which runs in time $O(m^2n^4(m+\log (1 / \eps)))$, where $n$ denotes the number of vertices, $m$ the number of edges, and $\eps$ the error.
This running time can be as high as $O(n^{10})$ for dense graphs.
In this paper, we bring it down to $\widetilde{O}(n^2)$ in simple graphs.
For two probability distributions $\mu$ and $\nu$ on the same finite state space $\Omega$, define their \emph{total variation distance} by $\TV(\mu,\nu):=\frac{1}{2}\sum_{x\in\Omega}|\mu(x)-\nu(x)|$.

Our main theorem is stated as follows. 
Here, $G$ is allowed to be a multigraph.

\begin{theorem}\label{thm:main}
For any constant $\beta>1$, there is an algorithm such that, given $G=(V,E)$ with $n$ vertices and $m$ edges, and an error bound {$\eps\in(0,1/2)$}, it outputs a random configuration with law $\nu$ satisfying $\TV(\nu,\mu^{\mathrm{Ising}}_{\beta,G})\leq\eps$ in time
\[
  O(m\log^2 n+n)+O_\beta\bigl(n^2\log^2 n\,\log^2(n/\eps)\bigr).
\]
In particular, if $G$ is simple, the total running time is $O_\beta(n^2\log^2 n\,\log^2(n/\eps))$.
\end{theorem}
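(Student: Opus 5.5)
The plan is to reduce Ising sampling to random-cluster sampling and then sparsify, following the abstract. First I would invoke the Edwards--Sokal coupling. Set $p := 1-1/\beta$. If $S\subseteq E$ is drawn from the random-cluster measure $\pi_{p,2}(S)\propto p^{|S|}(1-p)^{|E\setminus S|}2^{\kappa(S)}$, and each connected component of $(V,S)$ is then coloured independently and uniformly, the resulting configuration has law exactly $\mu^{\mathrm{Ising}}_{\beta,G}$. The colouring step costs $O(n+|S|)$ time, so it suffices to produce an $\eps$-approximate sample of the random-cluster configuration. More precisely, it suffices to produce an approximate sample of its connectivity partition, since that is all the colouring step uses.

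Second, I would sparsify so that the dynamics only ever touches $\widetilde O(n)$ edges. Using Bencz\'ur--Karger, computed in $O(m\log^2 n)$ time, I would assign each edge a strength $k_e$. Edges of strength larger than about $C_\beta\log(n/\eps)$ are, with probability $1-\eps/2$ under $\pi$, inside a single cluster together with their endpoints. The idea is to contract the strongly connected pieces, or to replace each strength class by a sampled sparse multigraph with rescaled parameters $p_e$. The target is a weighted random-cluster model on a multigraph $H$ with $O(n\log n)$ edges whose induced partition is within $\eps/2$ of the original.

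The cleanest version I would aim for is the following. Parallel edges of the random-cluster model with parameters $p_1,\dots,p_k$ between the same pair merge exactly into one edge with $1-\prod(1-p_i)$. Moreover, Bencz\'ur--Karger importance sampling preserves every cut of the graph up to a factor $1\pm\delta$, and the random-cluster weight of a partition is controlled by cut weights through $\prod_{e\text{ crossing}}(1-p_e)$. The price is that exponentiated cut errors must be summed over exponentially many cuts. This is handled by the Karger cut-counting bound, which says there are at most $n^{2\alpha}$ cuts of size at most $\alpha$ times the minimum cut.

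Third, on $H$ I would run the single-edge Glauber (heat-bath) dynamics for the random-cluster model. Each update needs a connectivity query, which a dynamic connectivity structure answers in polylog time. The required mixing bound is $O_\beta(n\log n\log(n/\eps))$ steps per edge scale; with $\widetilde O(n)$ edges this gives roughly $O_\beta(n^2\log^2 n\log^2(n/\eps))$ total after accounting for the polylog update cost. The mixing analysis is the main obstacle. I would prove a monotone edge-count Poincar\'e inequality: for functions increasing in the edge set, the variance is bounded by $O_\beta(n)$ times the Dirichlet form. Here $n$ replaces $|E|$, and the reason to expect this is that edges of large strength are almost always open and contribute little variance.

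Since $\pi_{p,2}$ is monotone (FKG) and the Glauber dynamics is a monotone system, I would then run a grand coupling from the all-open and all-closed initial states. A monotone Poincar\'e inequality applied to the indicator-type increasing observables, such as the number of open edges or the connection events, bounds the expected disagreement. This yields coupling in $O_\beta(n\log(n/\eps))$ time after a union bound. The hardest part is establishing that inequality. I would first try a comparison argument: the variance of the open-edge count is decomposed along a uniform-strength hierarchy from Bencz\'ur--Karger, and the random-cluster edge marginals are bounded using the $\beta>1$ constant together with cut-strength bounds.
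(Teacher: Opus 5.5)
There is a genuine gap, and it sits in the step you yourself flag as hardest: the mixing analysis. First, the inequality you aim for is the wrong one for the claimed running time. You propose $\operatorname{Var}(f(X))\le O_\beta(n)\sum_{e}\mathbb{E}[\operatorname{Var}(f(X)\mid X_{-e})]$ for increasing $f$, but you do not say how this yields decay of $\mathbb{E}|X_t^+\setminus X_t^-|$. A variance bound controls $P^t|\cdot|$, or the density $h_t^+:=P^t(\cdot,\mathbf 1)/\pi(\mathbf 1)$, only in $L^2(\pi)$. The disagreement, however, is governed by values at the extremal states. The standard way to pass between these costs a factor $\|h_0^+-1\|_{L^2(\pi)}=(1/\pi(\mathbf 1)-1)^{1/2}$, and $\pi(\mathbf 1)\le p^m$. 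That means $\Theta(m)$ extra relaxation times, which is about $\widetilde O(n^3)$ overall. The paper instead proves a bilinear inequality in which one argument is monotone and the other is fixed to be the edge count:
\[
  \operatorname{Cov}(F,|X|)\le 2n\sum_{e\in E}\mathbb{E}_\pi\bigl[\operatorname{Cov}(F,|X|\mid X_{-e})\bigr]=2nm\,\mathcal{E}_P(f,|\cdot|).
\]
It applies this to $f=h_t^+$, which is increasing by the monotone grand coupling. Set $\Delta_t^+:=\mathbb{E}|X_t^+|-\mathbb{E}_\pi|X|$. Then $\operatorname{Cov}(h_t^+(X),|X|)=\Delta_t^+$, and by self-adjointness and $h_{t+1}^+=Ph_t^+$, $\mathcal{E}_P(h_t^+,|\cdot|)=\Delta_t^+-\Delta_{t+1}^+$. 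Hence $\Delta_{t+1}^+\le(1-\frac{1}{2nm})\Delta_t^+$ starting from $\Delta_0^+\le m$, and symmetrically with $-h_t^-$. Because $\Delta_t^\pm$ is linear in the density and starts at most $m$, only a $\log(2m/\eps)$ factor is paid.

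Second, your route to any such inequality supplies no mechanism for a Poincar\'e-type bound. Edge marginals from $\beta>1$, cut-strength bounds and a strength hierarchy do not control a covariance by a Dirichlet form. Moreover, the factor $n$ does not come from strengths at all: the paper's mixing bound holds for every connected loopless multigraph. The missing idea is the Grimmett--Janson coupling. Draw an even subgraph $S$ with weight $\propto\rho^{|S|}$, where $\rho=p/(2-p)$, and open every other edge independently with probability $\rho$. The law of total covariance then splits $\operatorname{Cov}(F,|X|)$ into two parts. The lift term is bounded by $2\sum_e\mathbb{E}[\operatorname{Cov}(F,|X|\mid X_{-e})]$, using independence and $\rho(1-\rho)\le2\theta_e(1-\theta_e)$ for the heat-bath probability $\theta_e\in\{\rho,p\}$. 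The even-subgraph term is $(1-\rho)\operatorname{Cov}(\mathbb{E}[F\mid S],|S|)$. It is handled by the two-replica identity, by flipping the cycles of $I\oplus J$ (for independent even subgraphs $I,J$) one edge at a time through near-even subgraphs, and by an injective encoding of these flips. The encoding is combined with the bound $Z_B\le Z_0$, applied pointwise through the lift, where $Z_B$ is the $\rho$-weight of edge sets whose odd-degree vertex set is $B$. The factor $n$ is just the count $2n-2$ of possible odd-vertex sets adjacent to a fixed edge.

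Separately, drop the importance-sampling variant of sparsification. A $(1\pm\delta)$ cut guarantee perturbs the Ising weights $\beta^{-|\delta_G(U)|}$ by factors $\beta^{\pm\delta|\delta_G(U)|}$, which is not a small total-variation error. Contraction is the right approach. It is justified by stochastic domination of $\pi$ over Bernoulli-$\frac{\beta-1}{\beta+1}$ percolation, by Karger's cut counting inside each maximal $K$-strong component, and by the exact identity $\Lift(\mu')=\mu^{\mathrm{Ising}}_{\beta,G}(\cdot\mid\text{blocks monochromatic})$.
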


\subsection{Algorithm Overview}
The two main ingredients in the algorithm of \Cref{thm:main} are sparsification and an improved mixing time analysis for the random-cluster model.
Although the authors provided some high-level guidance, the detailed proofs of both ingredients were found by LLMs.
Subsequently, the authors simplified, streamlined, and wrote all of the proofs.
The authors also take full responsibility for the correctness of the paper.
For consistency with standard mathematical exposition, the words ``we'' and ``our'' are used throughout the paper, including when presenting arguments originating from the model.

The first ingredient of the algorithm is a sparsification argument.
It states that the original Gibbs distribution can be accurately approximated, in total variation distance, by another Gibbs distribution defined on a suitably sparsified version of the underlying graph.

\begin{theorem}[Ising sparsification]
  \label{thm:Ising:sparsification}
  For every constant $\beta>1$, there is an algorithm that, given $G:=(V,E)$ and an error bound $\eta\in(0,1)$, explicitly constructs a sparsified multigraph $G':=(V',E')$ and an implicit representation of a lifting map $\Lift:\{0,1\}^{V'}\to\{0,1\}^V$ in time $O(m\log^2 n+n)$, such that $|V'|\leq |V|=n$ and $|E'|=O_\beta(n\log(n/\eta))$, where edges are counted with multiplicity. Moreover:
  \begin{itemize}
    \item If $\mu'$ is the zero-field Ising Gibbs distribution on $G'$ with interaction parameter $\beta$, then $\TV(\Lift(\mu'),\mu^{\mathrm{Ising}}_{\beta,G})\leq\eta$, where $\Lift(\mu')$ denotes the law of $\Lift(X)$ for $X\sim\mu'$.
    \item Given any $\sigma\in\{0,1\}^{V'}$, the configuration $\Lift(\sigma)$ can be computed in $O(n)$ time.
  \end{itemize}
\end{theorem}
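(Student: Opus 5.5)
We derive the statement from the Edwards--Sonnenschein / Fortuin--Kasteleyn coupling, where the randomness on each edge is independent. Write $\beta=e^{2J}$ and $p:=1-1/\beta$. The joint measure on $(\sigma,\omega)\in\{0,1\}^V\times\{0,1\}^E$ assigns weight proportional to $\prod_{e}\bigl(p\,\mathbf 1[\omega_e=1,\sigma_u=\sigma_v]+(1-p)\mathbf 1[\omega_e=0]\bigr)$. Its $\sigma$-marginal is $\mu^{\mathrm{Ising}}_{\beta,G}$, and its $\omega$-marginal is the random-cluster measure with $q=2$. Given $\omega$, a sample $\sigma$ is obtained by flipping an independent fair coin for each connected component of $(V,\omega)$. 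The plan is therefore to find a sparse $G'$ whose random-cluster measure (equivalently, whose Ising measure) induces, with probability $\ge 1-\eta$, the same partition of $V$ into components as on $G$ once the lift is applied.

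\emph{Step 1 (heavy contraction via cut structure).}
Run Bencz\'ur--Karger strength estimation in $O(m\log^2 n)$ time. This assigns each edge $e$ a strength $k_e$ with $\sum_e 1/k_e\le n-1$. Fix a threshold $K=C_\beta\log(n/\eta)$, and let $H$ be the set of edges with $k_e\ge K$. The $K$-strong components, i.e.\ the components of $H$, will be contracted; contraction gives $V'$ with $|V'|\le n$. Define $\Lift(\sigma)$ by assigning each vertex the spin of its contracted super-vertex; this clearly takes $O(n)$ time.

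The justification for contraction is a union bound over cuts. Inside a $K$-strong component, every cut has at least $K$ edges. We compare the Ising measure with the measure in which the component is forced to be monochromatic. The probability that some strong component fails to be connected in $\omega$ is small, because each cut of size $c\ge K$ is entirely closed with probability at most $(1-p)^c\beta^{\,\cdot}$-type bounds. We handle this using the coupling domination of the random-cluster measure by independent percolation with $p/(p+q(1-p))$, which gives $\Pr[\text{cut of size } c \text{ closed}]\le (1-p')^{c}$ with $p'=p/(2-p)>0$. Karger's cut-counting bound (at most $n^{2\alpha}$ cuts of size $\le\alpha K$) then makes the union bound at most $\eta/2$ once $C_\beta$ is large. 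When every strong component is connected in $\omega$, the conditional law of the contracted configuration is exactly the Ising measure on the contracted multigraph, with multiplicities preserved. Controlling this conditioning step is the main obstacle: the conditioning event must be shown to have the right product form. I would handle it by noting that, given the partition is coarser than the strong-component partition, the $\omega$-marginal restricted to the remaining edges is the random-cluster measure on the quotient multigraph, so the TV error equals the probability of the bad event.

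\emph{Step 2 (sampling the light edges).}
After contraction, the remaining edges all have strength below $K$. Since $\sum 1/k_e\le n-1$, however, they may still number up to $\Theta(Kn)$, which already meets the target $|E'|=O_\beta(n\log(n/\eta))$. If multiplicities from Step 1 make parallel classes large, we cap each parallel class: an edge bundle of multiplicity $t\ge K$ between two super-vertices is itself a strong cut and would have been contracted. Hence each class has multiplicity $<K$, and the total is bounded by $K\sum_e 1/k_e$-style accounting, giving $O(Kn)$. The whole construction is linear in $m$ plus the Bencz\'ur--Karger cost, giving $O(m\log^2 n+n)$.
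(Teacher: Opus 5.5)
Your overall plan matches the paper: strength estimates, a threshold $K=\Theta_\beta(\log(n/\eta))$, contraction, Karger's cut counting against Bernoulli-$\frac{p}{2-p}$ percolation, and $|E'|\le K\sum_{e\in E'}1/k_e$. However, the step you flag as the main obstacle is not handled, and as you set it up it cannot be.

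You condition at the random-cluster level, which needs an event of product form. The event you name, that the open-cluster partition is coarser than the block partition, is not of that form. A block can be tied together through edges of $E'$, so conditioning reweights $\omega|_{E'}$ by the non-constant factor $\Pr(\text{event}\mid\omega|_{E'})$. Already for a triangle $abc$ with block $\{a,b\}$, the conditional law of $(\omega_{ac},\omega_{bc})$ gives the both-open configuration $1/p$ times the relative weight it has under the random-cluster measure on the quotient. So the induced spin law is not $\Lift(\mu')$.

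The event that does factor is that each block is connected by its own internal open edges. That event is likely only if every block is itself highly edge-connected. This holds if the $k_e$ are exact strengths. But in $O(m\log^2 n)$ time Bencz\'ur--Karger give only lower bounds $\widetilde s_e\le s_e$ with $\sum_e1/\widetilde s_e\le c_0n$, not exact strengths with $\sum_e 1/k_e\le n-1$. Then the components $C_i$ of $\{e:\widetilde s_e\ge K\}$ are merely subsets of the maximal $K$-strong sets $B_\ell$. Nothing in the guarantee prevents a block from being two vertices joined by a single edge. So your identification ``the $K$-strong components, i.e.\ the components of $H$'' fails, and so does the claim that every cut inside a block has at least $K$ edges.

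The missing idea is to condition on a \emph{spin} event instead. Let $\mathcal E$ be the event that every block $C_i$ is monochromatic. On $\mathcal E$ every intra-block edge is monochromatic, so $\mono(\Lift(\sigma'))=m_{\mathrm{in}}+\mono_{G'}(\sigma')$ with $m_{\mathrm{in}}$ independent of $\sigma'$. Hence $\Lift(\mu')=\mu^{\mathrm{Ising}}_{\beta,G}(\cdot\mid\mathcal E)$ exactly, for any partition into blocks whatsoever. It follows that $\TV(\Lift(\mu'),\mu^{\mathrm{Ising}}_{\beta,G})=1-\mu^{\mathrm{Ising}}_{\beta,G}(\mathcal E)$.

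The random-cluster model is then needed only to lower-bound $\mu^{\mathrm{Ising}}_{\beta,G}(\mathcal E)$. Because $\widetilde s_e\le s_e$, each $C_i$ lies inside a single maximal $K$-strong set $B_\ell$, and these sets are disjoint and $K$-edge-connected. Apply your cut-counting union bound to each $G[B_\ell]$ rather than to the blocks: all $B_\ell$ are open-connected with probability at least $1-\eta$, which forces $\mathcal E$ under Edwards--Sokal.

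Two smaller points. First, the comparison must be that the random-cluster measure \emph{dominates} Bernoulli-$\frac{p}{2-p}$ percolation, not the reverse as you worded it. Second, your Step~2 ``capping'' is unnecessary, and it would change the measure if it ever removed an edge; $|E'|\le K\sum_{e\in E'}1/\widetilde s_e\le c_0Kn$ already gives the edge bound.
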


The main technique we use is the machinery behind the classic cut sparsifier by Bencz\'ur and Karger \cite{BK15} and Karger's cut counting bounds \cite{Karger93,Karger99}.
Indeed, the Ising partition function can be viewed as a generating function for cuts, and it is perhaps not a surprise that these tools are useful here.
However, this argument appears to be new in the context of partition function approximation, and is the first sparsification argument in this context as far as we are aware.
In fact, we notice that the same technique works for all-terminal reliability \cite{GJ19} as well, which improves the running time of the FPRAS for all-terminal reliability from $\widetilde{O}(mn/\eps^2)$ \cite{CGZZ24} to $\widetilde{O}(m+n^2/\eps^2)$.
Details are given in \Cref{sec:reliability}.

The intuition here is simple.
For any subgraph with sufficiently high edge connectivity, with high probability, all vertices have the same spin, and thus can be contracted while approximating the original Gibbs distribution.
This crucially relies on the fact that $\beta$ is a constant.
If $\beta=1+o(1)$, such as $\beta=1+c/n$, then the argument is no longer valid.
The machinery in \cite{BK15} allows us to correctly and efficiently identify what subgraphs need to be contracted.
With this sparsification, we only need to consider Ising models on graphs with $m = O_\beta(n\log(n/\eps))$ edges.

On the sparsified graph, we use Glauber dynamics for the random-cluster model.
This model was introduced by Fortuin and Kasteleyn \cite{FK72}, which can be coupled with the ferromagnetic Ising model.
It is defined as follows.
Again, suppose the graph is $G=(V,E)$.
For $S\subseteq E$, let $\kappa(S)$ denote the number of connected components of $(V,S)$, including isolated vertices. Given $p\in(0,1)$, the random-cluster model on $G$ defines the probability distribution
\begin{align*}
  \forall S\subseteq E, \quad
  \mu^{\mathrm{RC}}_{p,G}(S)
  :=\frac{p^{|S|}(1-p)^{m-|S|}2^{\kappa(S)}}{Z_{\mathrm{RC}}},
\end{align*}
where $Z_{\mathrm{RC}}$ is the corresponding partition function. Edges in $S$ are called open, and edges in $E\setminus S$ are called closed.
Parallel edges are treated as distinct edges.
Edwards and Sokal \cite{ES88} showed that the ferromagnetic Ising model and the random-cluster model can be tightly coupled, which is stated in the following proposition.
\begin{proposition}[Edwards--Sokal coupling~\cite{ES88}]
  \label{prop:Edwards:Sokal}
  Let $p:=1-1/\beta$. Sample $S\sim\mu^{\mathrm{RC}}_{p,G}$ and assign an independent uniformly random spin in $\{0,1\}$ to each connected component of $(V,S)$, giving every vertex the spin of its component. The resulting configuration has distribution $\mu^{\mathrm{Ising}}_{\beta,G}$.
  Conversely, sample $\sigma\sim\mu^{\mathrm{Ising}}_{\beta,G}$ and, conditional on $\sigma$, include each monochromatic edge in $S$ independently with probability $p$ and close every bichromatic edge. Then $S\sim\mu^{\mathrm{RC}}_{p,G}$.
\end{proposition}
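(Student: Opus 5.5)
We prove \Cref{prop:Edwards:Sokal} by exhibiting a joint distribution on pairs $(\sigma,S)\in\{0,1\}^V\times 2^E$ whose two marginals are $\mu^{\mathrm{Ising}}_{\beta,G}$ and $\mu^{\mathrm{RC}}_{p,G}$. Both sampling procedures in the statement are then just the two conditional laws of this joint measure. Define
\begin{align*}
  \pi(\sigma,S)\propto p^{|S|}(1-p)^{m-|S|}\,\mathbf{1}[\text{every edge of }S\text{ is monochromatic under }\sigma].
\end{align*}
Parallel edges are treated as distinct throughout, which matches the multiplicity convention in $\mono(\sigma)$.

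\textbf{Step 1: the $\sigma$-marginal.} Fix $\sigma$. Every bichromatic edge is forced to lie outside $S$, so it contributes a factor $(1-p)$. Each monochromatic edge may be in $S$ or not, contributing $p+(1-p)=1$. Summing over $S$ gives
\begin{align*}
  \sum_S\pi(\sigma,S)\propto(1-p)^{m-\mono(\sigma)}=\beta^{-m}\beta^{\mono(\sigma)},
\end{align*}
using $1-p=1/\beta$. This is proportional to $\mu^{\mathrm{Ising}}_{\beta,G}(\sigma)$. The conditional law of $S$ given $\sigma$ factorizes over edges: each monochromatic edge is included independently with probability $p$, and each bichromatic edge is excluded. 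This is exactly the second (converse) procedure.

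\textbf{Step 2: the $S$-marginal.} Fix $S$. The constraint says $\sigma$ is constant on each connected component of $(V,S)$, so exactly $2^{\kappa(S)}$ configurations $\sigma$ are admissible, each with the same weight. Hence
\begin{align*}
  \sum_\sigma\pi(\sigma,S)\propto p^{|S|}(1-p)^{m-|S|}2^{\kappa(S)},
\end{align*}
which is $\mu^{\mathrm{RC}}_{p,G}(S)$. The conditional law of $\sigma$ given $S$ is uniform over these $2^{\kappa(S)}$ configurations. Equivalently, each component receives an independent uniform spin, which is the first procedure.

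\textbf{Conclusion.} The first procedure draws $S$ from the $S$-marginal and then $\sigma$ from its conditional, so $(S,\sigma)\sim\pi$ and in particular $\sigma\sim\mu^{\mathrm{Ising}}_{\beta,G}$. The converse follows symmetrically. Both marginal computations are proportional statements, so the normalizing constants agree automatically, and as a byproduct $Z_{\mathrm{RC}}=\beta^{-m}Z^{\mathrm{Ising}}_\beta$.

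There is no real obstacle here. The only point needing care is the bookkeeping of the $(1-p)$ factors and of parallel edges in Step 1, in particular checking the identity $(1-p)^{m-\mono(\sigma)}\propto\beta^{\mono(\sigma)}$.
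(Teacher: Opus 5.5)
Your argument is correct. It is the standard Edwards--Sokal joint-measure proof: the product weight $p^{|S|}(1-p)^{m-|S|}$ times the compatibility indicator, with both marginals and both conditional laws read off directly. The paper gives no proof of its own and simply cites \cite{ES88}, where this is the argument; the only cosmetic issue is that the paper later reserves $\pi$ for $\mu^{\mathrm{RC}}_{p,G}$, so a different symbol for your joint measure would avoid a clash.
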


Thus, it suffices to sample from $\mu^{\mathrm{RC}}_{p,G}$.
For this, we use the single-edge heat-bath \emph{Glauber dynamics}.
Start from an arbitrary subset $X_0\subseteq E$.
At each step, given the current configuration $X$, choose an edge $e\in E$ uniformly at random and let $u$ and $v$ be its endpoints. The Glauber dynamics resamples the state of $e$ conditional on $X\setminus\{e\}$. Formally, the next configuration is $X\cup\{e\}$ with probability
\begin{align*}
  p' := \begin{cases}
    p, &
      \text{if $u$ and $v$ lie in the same connected component of $(V,X\setminus\{e\})$,}\\
    \displaystyle\frac{p}{p+2(1-p)}
      & \text{otherwise}.
  \end{cases}
\end{align*}
Otherwise, with probability $1-p'$, the next configuration is $X\setminus\{e\}$.  Parallel edges are treated as distinct edges.
It is well known that the Glauber dynamics converges to the stationary distribution $\mu^{\mathrm{RC}}_{p,G}$.

Let $P$ denote the transition kernel of the Glauber dynamics above.  For $\eps\in(0,1)$, define its mixing time by
\begin{align*}
  \Tmix(P,\eps)
  :=\min\left\{t\geq0:
    \max_{X_0\subseteq E}
    \TV\bigl(P^t(X_0,\cdot),\mu^{\mathrm{RC}}_{p,G}\bigr)
    \leq\eps
  \right\}.
\end{align*}
We prove the following mixing time bound for the Glauber dynamics.
\begin{theorem}[Mixing time of the Glauber dynamics]
  \label{thm:RC:Glauber:mixing}
Let $p \in (0,1)$ be a constant.
For every connected loopless multigraph $G:=(V,E)$ with $n:=|V|$ and $m:=|E|>0$ and every $\eps\in(0,1)$,
\[
\Tmix(P,\eps) \leq \left\lceil 2nm \log \frac{2m}{\eps} \right\rceil.
\]
\end{theorem}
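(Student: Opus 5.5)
The plan is a monotone coupling argument, with the contraction of the coupled discrepancy supplied by a Poincar\'e-type inequality for the edge-count function $S\mapsto|S|$ over increasing functions. First I record that the heat-bath Glauber dynamics for the random-cluster model with $q=2\geq1$ is monotone. For $X\subseteq Y$ and any edge $e$, if the endpoints of $e$ are connected in $(V,X\setminus\{e\})$ then they are connected in $(V,Y\setminus\{e\})$. Hence the opening probability $p'$ is at least as large under $Y$, since $p\geq p/(2-p)$. Using the same edge $e$ and the same uniform variable $U$ in both chains gives a grand coupling preserving $\subseteq$.

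Next I run the chain $X^{\top}_t$ from $E$ and $X^{\bot}_t$ from $\emptyset$. Every chain started from an arbitrary $X_0$ is sandwiched between them, so
\[
\TV\bigl(P^t(X_0,\cdot),\mu^{\mathrm{RC}}_{p,G}\bigr)\leq\Pr[X^\top_t\neq X^\bot_t]\leq\Ex\bigl[|X^\top_t|-|X^\bot_t|\bigr].
\]
Write $\Phi_t$ for the right-hand side. I would split it as
\[
\Phi_t=\bigl(\Ex|X^\top_t|-\mu|S|\bigr)+\bigl(\mu|S|-\Ex|X^\bot_t|\bigr),
\]
where $\mu|S|$ denotes the mean of $|S|$ under $\mu^{\mathrm{RC}}_{p,G}$. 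It then suffices to show that each term is at most $m\,e^{-t/(2nm)}$. Taking $t=\lceil 2nm\log(2m/\eps)\rceil$ then gives the claimed bound.

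To bound each term, I introduce $g_t(S):=\Ex_S|X_t|=(P^t f)(S)$ with $f(S)=|S|$. By monotonicity $g_t$ is increasing, and $\Ex|X^\top_t|-\mu|S|=g_t(E)-\mu(g_t)$. The key analytic input should be a \emph{monotone edge-count Poincar\'e inequality}, which I expect to take the form
\[
\mathrm{Cov}_\mu(h,f)\leq 2nm\cdot\mathcal{E}_P(h,f)\quad\text{for all increasing }h.
\]
Here $\mathcal{E}_P$ is the Dirichlet form of the (reversible) Glauber dynamics. Combined with reversibility, the semigroup identity and the fact that $P^t$ preserves increasing functions, this would give geometric decay at rate $1-1/(2nm)$ per step, first of $\mathrm{Cov}_\mu(\cdot,f)$ and then of the sandwich gap. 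The initial value is at most $m$ because $0\le |S|\le m$.

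Concretely, I would first try a direct drift calculation for the coupled pair $X\subseteq Y$. A discrepant edge is selected with probability $D/m$ and is then repaired with probability at least $1-\bigl(p-\tfrac{p}{2-p}\bigr)$. The only new discrepancies arise from edges whose endpoints are connected in $Y\setminus\{e\}$ but not in $X\setminus\{e\}$. The required inequality is that, in expectation under the coupled process rather than in the worst case, the number of such edges is controlled by $(1-\tfrac1{2n})$ times $D$ times an appropriate factor. The intended mechanism is that the $Y$-components are formed by merging $X$-components via the edges of $Y\setminus X$, and at most $n-1$ merges can ever occur. This worst-case count can fail, which is exactly why a Poincar\'e inequality over increasing functions is needed.

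I expect the main obstacle to be proving this monotone edge-count Poincar\'e inequality with the constant $2nm$. The factor $m$ comes from the uniform choice of edge, and the factor $n$ must come from the number of components $\kappa$ changing by at most $n-1$. My first attempt would be a canonical-path or telescoping argument. It writes $f(S)-\mu f$ through the edge-revealing (Doob martingale) decomposition in a fixed order and bounds each increment by the local variance at that edge. It would use FKG for $\mu^{\mathrm{RC}}_{p,G}$ (which holds since $q\geq1$) to control cross terms for increasing functions. It would charge the connectivity-dependent part of the conditional probabilities to the at most $n-1$ ``bridge'' events.
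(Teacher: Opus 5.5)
Your outer argument is the paper's own reduction: the monotone grand coupling, the sandwich bound $\TV(P^t(A,\cdot),\pi)\le\mathbb E|X_t^\top\setminus X_t^\bot|$, the split into upper and lower biases each bounded by $m e^{-t/(2nm)}$, and the target inequality $\operatorname{Cov}_\pi(h,|\cdot|)\le 2nm\,\mathcal E_P(h,|\cdot|)$ for increasing $h$. That last inequality is exactly the paper's monotone edge-count Poincar\'e inequality. Say explicitly which function you feed into it. It is not $g_t=P^t|\cdot|$ itself but the density $h_t^+=P^t(\mathbf 1_{\{E\}}/\pi(E))$, which is increasing because $h_t^+(A)=P^t(A,E)/\pi(E)$. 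For this choice $\operatorname{Cov}(h_t^+,|\cdot|)=g_t(E)-\pi(|\cdot|)$, and by self-adjointness $\mathcal E_P(h_t^+,|\cdot|)=\Delta_t^+-\Delta_{t+1}^+$; for the lower bias use $-h_t^-$. Decay of $\operatorname{Cov}(P^t|\cdot|,|\cdot|)$ alone is only an $L^2$ statement. However, the inequality is the entire content of the theorem, and you leave it unproved. This is a genuine gap.

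The route you sketch does not supply it. In the edge-revealing martingale, the increment at $e_i$ carries the effect of $X_{e_i}$ on the conditional law of every unrevealed edge. Bounding $\mathbb E[\Delta_i h\,\Delta_i|\cdot|]$ by the purely local quantity $\mathbb E_\pi[\theta_e(1-\theta_e)\nabla_e h]$ requires controlling this propagated influence uniformly over all $G$ and all constant $p$, including critical regimes. FKG only certifies that such correlation terms are nonnegative, which is the wrong direction for an upper bound on a covariance. The \emph{bridge events} heuristic does not yield an actual inequality.

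The missing idea is the even-subgraph representation. By the Grimmett--Janson coupling, $X\sim\pi$ is obtained from $S\sim\nu_0$ by adding each other edge independently with probability $\rho=p/(2-p)$. The law of total covariance then splits $\operatorname{Cov}(F,|X|)$ into two pieces:
\begin{itemize}
\item a lift term, bounded with constant $2$ via $\rho(1-\rho)\le 2\theta_e(1-\theta_e)$;
\item the term $(1-\rho)\operatorname{Cov}(\Phi_f(S),|S|)$, with $\Phi_f(S)=\mathbb E[f(X)\mid S]$ increasing.
\end{itemize}
The second piece is bounded by a two-replica cycle-switching argument. You trade $f(J)$ for $f(I\oplus Q_e)$ and telescope along the oriented cycle through states in $\Omega_{\le2}$. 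Each flip is encoded injectively by $(S,S\oplus D,a)$, and the resulting $U$-sums are bounded by $Z_B\le Z_0$. Pulling back through $\frac{1}{Z_0}\sum_{S\in\Omega_B}\rho^{|S|}K(S,A)\le\pi(A)$ produces the factor $2n-2$. That factor counts the odd sets $B$ with $|B|,|B\oplus\{u,v\}|\in\{0,2\}$; this, and not the number of component merges, is where the factor $n$ comes from.
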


Previously, the best (and only) mixing-time upper bound for this dynamics was the result of Guo and Jerrum \cite{GJ18}, which is an $O(m^2n^4(m+\log (1 / \eps)))$ bound.

Our proof of \Cref{thm:RC:Glauber:mixing} uses a monotone grand coupling.
The key is to control how fast two chains from the all-closed and all-open configurations coalesce.
We show that the difference of their expected sizes contracts geometrically at a rate of roughly $1-\frac{1}{mn}$,
which leads to the bound in \Cref{thm:RC:Glauber:mixing}.
This decay follows from a new inequality, which we call the monotone edge-count Poincar\'e inequality.
A standard Poincar\'e inequality bounds the variance in terms of the Dirichlet form for a single function,
whereas our version considers the covariance of two functions against their Dirichlet form.
More precisely, we restrict one function to be monotone, and fix the other to be the edge count observable.
With an appropriate choice of the monotone function,
the covariance becomes the expected size of the difference at time $t$,
and the Dirichlet form equals the one-step decrease in this expected size from time $t$ to $t+1$.
Thus, such an inequality establishes a geometrical decay of this difference.

To prove the inequality, we use the Grimmett--Janson coupling \cite{GJ09} to sample a weighted even subgraph and then lift it to a random-cluster configuration.
The law of total covariance separates the lift and even-subgraph contributions.
The lift part consists of independent Bernoulli additions, and is relatively easy to handle.
The even-subgraph contribution is reduced to local differences by flipping edges along the cycles in the symmetric difference.
These edge flips resemble previous constructions of canonical paths \cite{JS93} or multi-commodity flows \cite{GJ18}.
The key difference is that our argument is much more direct and does not rely on bounding the congestion in any way.
It completely bypasses several lossy steps in previous arguments.

We expect the mixing time bound in \Cref{thm:RC:Glauber:mixing} not to be tight, but we also conjecture that the tight bound is not near-linear.
We discuss this further in \Cref{sec:conclude}.

Each transition of the Glauber dynamics on the random-cluster model can be implemented in $O(\log^2 n)$ amortized time using a fully dynamic connectivity data structure~\cite{WN13,CZ23}.
Therefore, one can use Glauber dynamics to sample a random configuration from $\mu^{\mathrm{RC}}_{p,G}$ in time $\widetilde O_p(nm \log (1 / \eps))$.
The algorithm in \Cref{thm:main} is obtained by combining the Ising sparsification theorem in \Cref{thm:Ising:sparsification}, the Edwards--Sokal coupling in \Cref{prop:Edwards:Sokal}, and the mixing-time bound for the Glauber dynamics in \Cref{thm:RC:Glauber:mixing}.
The full algorithm is given in \Cref{sec:proof:main}.

\section{A Near-Linear-Time Ising Sparsification}\label{sec:sparsification}

In this section, we prove \Cref{thm:Ising:sparsification}.
Self-loops contribute the same factor $\beta$ to every spin configuration and therefore do not affect the Ising Gibbs distribution.  We may delete all self-loops and assume throughout this section that $G$ is loopless.
Our main tools are the edge-strength estimates underlying the Bencz\'ur--Karger cut sparsifier \cite{BK15} and Karger's bounds on the number of cuts \cite{Karger93,Karger99}.
We first recall the relevant definitions.

\begin{definition}[$k$-edge-connected]
  Let $G:=(V,E)$ be a graph.  For a subset $S\subseteq V$, let $\delta_G(S)$ denote the multiset of edges with exactly one endpoint in $S$; this multiset is called a \emph{cut} of $G$.  The cut is \emph{nontrivial} if $S$ is a nonempty proper subset of $V$.

  We say that $G$ is $k$-edge-connected if all nontrivial cuts of $G$ have size at least $k$, i.e., $|\delta_G(S)|\geq k$ for every nonempty proper subset $S\subset V$.
\end{definition}

In particular, a singleton vertex is considered $k$-edge-connected for any $k>0$.

\begin{definition}[Edge strength]\label{def:edge:strength}
  Fix a graph $G:=(V,E)$. For an edge $e=uv\in E$, its strength $s_e$ is the largest integer $k$ for which $u$ and $v$ are contained in a common induced subgraph that is $k$-edge-connected.
\end{definition}

In \cite{BK15}, Bencz\'ur and Karger introduced the notion of \emph{tight strength} $\widetilde s_e$ for every edge $e\in E$, which is a computable lower bound on $s_e$.
The following is the result for the unweighted case, which is sufficient for our purpose.

\begin{proposition}[\text{\cite[Theorem~6.5]{BK15}}]
  \label{prop:BK:tight:strength}
  Given an unweighted multigraph $G:=(V,E)$ with $n:=|V|$ and $m:=|E|$, one can compute in $O(m\log^2 n)$ time positive values $(\widetilde s_e)_{e\in E}$ such that $\widetilde s_e\leq s_e$ for every $e\in E$ and $\sum_{e\in E}1/\widetilde s_e\leq c_0n$, where $c_0>0$ is an absolute constant.
\end{proposition}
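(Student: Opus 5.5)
This proposition is quoted directly from \cite[Theorem~6.5]{BK15}, so in the paper we simply invoke it. If we had to reprove it, the plan would be to rebuild the Bencz\'ur--Karger strength-estimation procedure from three ingredients. The first is a sparse certificate for $k$-edge-connectivity. The second is a recursive peeling of edges whose strength is provably small. The third is a potential argument bounding $\sum_e 1/\widetilde s_e$.

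\emph{Certificates.} Nagamochi--Ibaraki gives, in near-linear time, a partition of $E$ into forests $F_1,F_2,\dots$. For every $k$, the union $F_1\cup\dots\cup F_k$ preserves all cuts of size less than $k$ and has at most $k(n-1)$ edges. Consequently, any edge outside this union lies in a $k$-edge-connected component and so has strength at least $k$. Conversely, a graph with more than $k(n-1)$ edges must contain a $k$-edge-connected induced subgraph. The averaging argument is that if every nontrivial cut separating two pieces had fewer than $k$ edges, recursive splitting would leave at most $(k-1)(n-1)$ edges.

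\emph{Recursive estimation.} Proceed over geometrically increasing thresholds $k=2^i$. At each level, compute the certificate and mark every edge not in the first $k$ forests as having strength at least $k$. Edges that remain in the certificate and that were not marked at a higher level get $\widetilde s_e:=k$. Then contract or remove the identified high-strength pieces and recurse, maintaining the invariant that every assigned value is a lower bound on the true strength $s_e$. The running time is $O(m\log^2 n)$ if each level costs $O(m\log n)$ and there are $O(\log n)$ levels. Keeping the cost of a level from blowing up requires working on the sparse certificate with $O(kn)$ edges rather than on all of $G$.

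\emph{Sum bound, the main obstacle.} The key fact to prove is that $\sum_e 1/s_e\le n-1$ for the true strengths. This follows by induction: take the minimum cut of $G$, of size $k$. Each of its $k$ edges has strength exactly $k$, so together they contribute $1$. Removing the cut splits $G$ into two components with $n_1+n_2=n$ vertices, and strengths inside each component do not decrease. The sum identity therefore closes. For the estimates $\widetilde s_e$, we must show that assigning $\widetilde s_e=k$ happens to at most $O(kn_j)$ edges per relevant subproblem and that the subproblem sizes telescope, so the total is $O(n)$ even though each estimate may be off by a factor of $2$. I expect this charging argument to be the delicate step. The plan is to mimic the min-cut induction above, using certificates in place of exact cuts and charging each level's $O(kn)$ labeled edges to the vertices merged at that level.
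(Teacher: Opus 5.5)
Citing \cite[Theorem~6.5]{BK15} is exactly what the paper does. It gives no proof of its own, so your first sentence is all that is needed.

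The reconstruction you then sketch fails at its first step. The certificate property only concerns cuts of $G$. If $e=uv\notin F_1\cup\dots\cup F_k$, maximality gives a $u$--$v$ path in every $F_i$, hence $\lambda(u,v)\geq k+1$. Strength, however, is defined through cuts of \emph{induced subgraphs}. Forests restricted to an induced subgraph need not be maximal there, and classes of the relation $\lambda\geq k$ need not induce $k$-edge-connected subgraphs.

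Here is a counterexample. Take three disjoint large cliques $A,B,T$ and vertices $u\in A$, $v\in B$. Join $A$ to $B$ by $e=uv$ and three more edges $a_jb_j$, and join $T$ to each of $A$ and $B$ by two edges.
\begin{itemize}
\item Every induced subgraph containing $u,v$ has a cut with at most $4$ edges: around $T$ if it meets $T$, between $A$ and $B$ otherwise. So $s_e=4$.
\item Let $F_1,F_2$ be spanning trees each using one $A$--$T$ edge and one $T$--$B$ edge. After removing them, $T$ is detached, and $F_3,F_4,F_5$ can be maximal forests using $a_1b_1$, $a_2b_2$, $a_3b_3$ respectively.
\item This is a valid maximal-forest decomposition with exactly the cut property you use: the only cut of size below $5$ is $\delta(T)$, which lies in $F_1\cup F_2$.
\end{itemize}
Yet $e\notin F_1\cup\dots\cup F_5$ although $s_e<5$.

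The certificate can only be used in the converse direction. One computes, e.g.\ by iterating certificates, a set of $O(kn)$ edges that \emph{contains} every $k$-weak edge; only edges outside that set are certified $k$-strong.

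Your labelling rule is also reversed. An edge still inside the level-$k$ certificate has no certified lower bound; a bridge, for instance, lies in every certificate. Such an edge must be labelled by the largest level at which it was certified, not a level at which it remains in the certificate.

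Finally, the $O(n)$ bound is not supplied. Your induction for $\sum_e 1/s_e\leq n-1$ is correct. But if the estimates really were within a factor $2$ of $s_e$, $\sum_e 1/\widetilde s_e\leq 2(n-1)$ would follow immediately. Bencz\'ur and Karger need a separate charging argument over their recursion precisely because their efficiently computable estimates are not per-edge constant-factor approximations. You leave that argument as a plan, so the sketch does not establish the proposition; in the paper the citation carries the whole weight.
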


We also use the following cut-counting bound of Karger~\cite{Karger93,Karger99}.

\begin{proposition}[\text{\cite[Corollary~A.7]{Karger99}}]
  \label{prop:Karger:cut:counting}
Let $G:=(V,E)$ be a multigraph on $n\geq2$ vertices such that every nontrivial cut of $G$ has size at least $K$.  For every real $\alpha\geq1$, the number of distinct cuts of $G$ with size at most $\alpha K$ is at most $n^{2\alpha}$.
\end{proposition}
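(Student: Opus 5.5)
We would prove this by Karger's random contraction argument, counting cuts through the probability that a given cut survives. Fix $\alpha\geq1$ and set $k:=\lceil 2\alpha\rceil$. If $n\leq k$, the bound is trivial: $G$ has at most $2^{n-1}-1$ distinct nontrivial cuts, and $2^{n-1}\leq n^{n}\leq n^{2\alpha}$ whenever $n\geq 2$ and $n\leq 2\alpha$. So assume $n>k$.

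\textbf{The sampling procedure.} We run the following randomized procedure. Repeatedly pick an edge uniformly at random among the current (multi)edges, contract it, and delete the resulting self-loops. Stop when exactly $k$ super-vertices remain. Then output a uniformly random nontrivial cut among the $2^{k-1}-1$ nontrivial cuts of the contracted graph, viewed as a cut of $G$. Every nontrivial cut of a contracted graph is also a nontrivial cut of $G$, so contraction never decreases the minimum cut. Hence when $r$ super-vertices remain, every super-vertex has degree at least $K$, and there are at least $rK/2$ edges.

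\textbf{Survival of a fixed small cut.} Now fix any cut $C=\delta_G(S)$ with $|C|\leq\alpha K$. It survives a contraction step exactly when the chosen edge avoids $C$, which happens with probability at least $1-\frac{\alpha K}{rK/2}=1-\frac{2\alpha}{r}$. Multiplying over the steps, $C$ survives to $k$ super-vertices with probability at least
\[
  \prod_{r=k+1}^{n}\frac{r-2\alpha}{r}.
\]
Conditional on survival, $C$ is output with probability at least $2^{-(k-1)}$. The events ``$C$ is output'' are disjoint over distinct cuts, so the number of such cuts is at most
\[
  2^{k-1}\prod_{r=k+1}^{n}\frac{r}{r-2\alpha}.
\]

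\textbf{Bounding the count.} When $2\alpha=k$ is an integer, the product equals $\binom{n}{k}\leq n^k/k!$. Since $2^{k-1}\leq k!$, the count is at most $n^{k}=n^{2\alpha}$, as required.

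\textbf{Main obstacle: non-integer $2\alpha$.} The hardest step is the non-integer case, where $k-1<2\alpha<k$ and we must still beat $n^{2\alpha}$ rather than $n^{k}$. Here we would write the product via Gamma functions:
\[
  \prod_{r=k+1}^{n}\frac{r}{r-2\alpha}
  =\frac{\Gamma(n+1)\,\Gamma(k+1-2\alpha)}{\Gamma(k+1)\,\Gamma(n+1-2\alpha)}.
\]
We would then use log-convexity of $\Gamma$ (Gautschi/Wendel-type inequalities) to bound $\Gamma(n+1)/\Gamma(n+1-2\alpha)\leq (n+1)^{2\alpha}$ or a similar quantity. The remaining constant factor $2^{k-1}\Gamma(k+1-2\alpha)/\Gamma(k+1)$ must be absorbed. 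If this direct computation is too lossy, the fallback is the refinement in Karger's appendix: stop the contraction at $\lfloor 2\alpha\rfloor$ super-vertices and select the output cut more carefully. We would also check whether tightening $(n+1)^{2\alpha}$ to $n^{2\alpha}$ requires treating small $n$ separately.
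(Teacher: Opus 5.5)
Since the paper gives no proof and simply cites Karger, I'm judging your proposal on its own. Your random-contraction scheme (contract to $k=\lceil 2\alpha\rceil$ super-vertices, then output a uniformly random nontrivial cut) is the standard route to this bound. The parts you actually carry out are correct:
\begin{itemize}
\item the degree bound giving at least $rK/2$ edges;
\item the survival probability $\prod_{r=k+1}^{n}(1-2\alpha/r)$;
\item the disjointness argument;
\item the case $2\alpha=k\in\mathbb Z$, via $2^{k-1}\binom nk\le n^k$.
\end{itemize}
This already covers the paper's only use of the proposition, where $\alpha=j+1$ is an integer. There is one small slip. In your trivial case $n\le k$ you may have $n>2\alpha$ when $2\alpha\notin\mathbb Z$. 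Argue instead that $2\alpha>k-1\ge n-1$, so $n^{2\alpha}\ge n^{n-1}\ge 2^{n-1}$.

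The genuine gap is the case $2\alpha\notin\mathbb Z$. The statement claims it, but you leave it as a list of options. The concrete option you name does not finish it as stated. Bounding $\Gamma(n+1)/\Gamma(n+1-2\alpha)$ by $(n+1)^{2\alpha}$ is too lossy for small $n$. For $k=3$, $2\alpha=2.99$, $n=4$, the leftover constant $2^{k-1}\Gamma(k+1-2\alpha)/\Gamma(k+1)$ is about $0.66$, while $(5/4)^{2.99}\approx1.95$. So you only get about $1.29\,n^{2\alpha}$, which is why you would need the small-$n$ case analysis you mention.

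The missing idea is to peel off only the fractional part. Write $2\alpha=(k-1)+s$ with $0<s<1$. Split each factor as $\frac{r}{r-2\alpha}=\frac{r}{r-k+1}\cdot\frac{r-k+1}{r-2\alpha}$ and substitute $j=r-k+1$:
\[
  \prod_{r=k+1}^{n}\frac{r}{r-2\alpha}
  =\frac1k\binom{n}{k-1}\prod_{j=2}^{n-k+1}\frac{j}{j-s}
  \le\frac{n^{k-1}}{k!}\,(n-k+1)^s .
\]
The inequality uses Bernoulli's inequality $(1-1/j)^s\le 1-s/j$ for $s\in[0,1]$, i.e.\ $\frac{j}{j-s}\le\bigl(\frac{j}{j-1}\bigr)^s$, which telescopes to $(n-k+1)^s$. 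Multiplying by $2^{k-1}\le k!$ gives at most $n^{k-1}\cdot n^{s}=n^{2\alpha}$ for every $n>k$, with no small-$n$ cases.

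In your Gamma-function language, this is the log-convexity bound $\Gamma(x+s)\le x^s\Gamma(x)$ with $x=n+1-2\alpha$, together with $\Gamma(2-s)\le 1$. The point is to split off the integer part $k-1$ first, rather than bounding the whole exponent $2\alpha$ at once.
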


We are now ready to prove the Ising sparsification theorem.
\begin{proof}[Proof of \Cref{thm:Ising:sparsification}]
Apply \Cref{prop:BK:tight:strength} to $G$ to obtain $(\widetilde s_e)_{e\in E}$ satisfying
\begin{align}
  \widetilde s_e\leq s_e
  \quad\text{for every $e\in E$},
  \qquad
  \sum_{e\in E}\frac{1}{\widetilde s_e}\leq c_0n.
  \label{eqn:strength:estimates}
\end{align}
We then set
\begin{align}
  K:=\left\lceil
    \frac{\log(2n^5/\eta)}{\log((\beta+1)/2)}
  \right\rceil
  \label{eqn:sparsification:K}
\end{align}
and let $E_{\geq K}:=\{e\in E:\widetilde s_e\geq K\}$.  Let $C_1,\ldots,C_N$ be the connected components of $(V,E_{\geq K})$.  Construct $G'$ by contracting each component $C_i$ into a distinct single vertex, deleting all resulting self-loops, and retaining every edge between distinct components with its multiplicity.
Formally, $V':=\{C_1,\ldots,C_N\}$.  For any two distinct $C_i,C_j\in V'$, the number of parallel edges between $C_i$ and $C_j$ equals the number of edges in $E$ with endpoints in $C_i$ and $C_j$.

Since all edges in $E_{\geq K}$ are contracted, every edge $e$ of $G'$ satisfies $\widetilde s_e<K$ in $G$.  Hence
\begin{align*}
  |E'| = \sum_{e \in E'} 1
  \leq
  \sum_{e\in E'}\frac{K}{\widetilde s_e}
  \leq K \cdot c_0 n
  =O_\beta\bigl(n\log(n/\eta)\bigr),
\end{align*}
where the last inequality follows from \eqref{eqn:strength:estimates}, and $|V'|=N\leq n$.

We can lift an Ising configuration on $G'$ back to a configuration on $G$ by giving each contracted component the same spin as the corresponding vertex in $G'$.
We next bound the total variation distance between the Ising distribution on $G$ and the distribution lifted from $G'$.
For $U\subseteq V$, let $G[U]$ denote the subgraph of $G$ induced by $U$, and call $U$ a $K$-strong vertex set if $G[U]$ is $K$-edge-connected.  Let $B_1,\ldots,B_L$ be the inclusion-maximal $K$-strong vertex sets, and call the induced graphs $G[B_1],\ldots,G[B_L]$ the maximal $K$-strong components.

If two $K$-strong vertex sets $B_i$ and $B_j$ intersect, then their union is again $K$-strong.
To see this, any nonempty proper subset $S\subset B_i\cup B_j$ restricts to a nonempty proper subset of at least one of the two sets and hence the cut has size at least $K$.
Consequently, the maximal sets $B_1,\ldots,B_L$ are pairwise disjoint.  Every vertex belongs to one of these maximal sets because a singleton is $K$-strong.
Now consider an edge $e=uv\in E_{\geq K}$.  Since $\widetilde s_e\geq K$ and $\widetilde s_e\leq s_e$, the definition of $s_e$ implies that $u$ and $v$ belong to a common $K$-strong vertex set, and hence to the same maximal set $B_\ell$.  Following an $E_{\geq K}$-path shows that every connected component $C_i$ of $(V,E_{\geq K})$ is contained in a unique $B_\ell$.

Consider the random-cluster model with parameter $p:=1-1/\beta$ on $G$ through the Edwards--Sokal coupling in \Cref{prop:Edwards:Sokal}.
For any edge $e\in E$, the conditional probability that $e$ is open given the states of all other edges is at least
\begin{align*}
  x:=\min\left\{p,\frac{p}{p+2(1-p)}\right\}
   =\frac{p}{p+2(1-p)}
   =\frac{\beta-1}{\beta+1}.
\end{align*}
Thus, the random-cluster configuration stochastically dominates independent Bernoulli-$x$ bond percolation, where every edge is open independently with probability $x$.

Fix a maximal $K$-strong component $H=G[B_\ell]=(V_H,E_H)$ with at least two vertices, and let $n_H:=|V_H|\leq n$.  For $j\geq1$, let $\mathcal C_j$ be the collection of cuts of $H$ whose sizes belong to the interval $[jK,(j+1)K)$.
By \Cref{prop:Karger:cut:counting},
\begin{align*}
  |\mathcal C_j|
  \leq n_H^{2(j+1)}
  \leq n^{2(j+1)}.
\end{align*}
The percolated subgraph of $H$ can be disconnected only if all the edges of some nontrivial cut of $H$ are closed.
For a fixed cut of $H$ with size at least $jK$, under Bernoulli-$x$ bond percolation, the probability that all edges in the cut are closed is at most $(1-x)^{jK}$.
By a union bound,
\begin{align}
  \Pr(\text{the percolated subgraph of }H\text{ is disconnected}) &\leq \; \sum_{j\geq1}|\mathcal C_j|(1-x)^{jK}\notag \\
  &\leq \; \sum_{j\geq1}n^{2(j+1)}(1-x)^{jK}
  =\frac{n^4(1-x)^K}{1-n^2(1-x)^K}.
  \label{eqn:sparsification:cut:sum}
\end{align}
Moreover, $1-x=2/(\beta+1)$, and the choice of $K$ in \eqref{eqn:sparsification:K} implies $(1-x)^K =(2 / (\beta + 1))^K\leq \eta / (2n^5)$.
In particular, $n^2(1-x)^K\leq\eta/(2n^3)\leq1/2$.  Substituting this estimate into \eqref{eqn:sparsification:cut:sum} yields
\begin{align*}
  \Pr(\text{the percolated subgraph of }H\text{ is disconnected})
  &\leq \frac{n^4(1-x)^K}{1-n^2(1-x)^K}
  \leq 2n^4(1-x)^K
  \leq \frac{\eta}{n},
\end{align*}
where singleton components are trivially connected.
Since connectivity is an increasing event, stochastic domination gives the same upper bound under the random-cluster measure.  There are at most $n$ maximal $K$-strong components, so another union bound shows that all of them are connected in the random-cluster configuration with probability at least $1-\eta$.  Under the Edwards--Sokal coupling, each open component receives the same spin.  Therefore, under the Ising model on graph $G$, it holds that
\begin{align*}
  \mu^{\mathrm{Ising}}_{\beta,G}(\mathcal E)\geq1-\eta,
\end{align*}
where $\mathcal E$ is the event that all vertices in each block $C_i$ are assigned the same spin.

Store for each $v\in V$ the index of the block $C_i$ containing it.  This is an implicit representation of the lifting map: for $\sigma'\in\{0,1\}^{V'}$, set $\Lift(\sigma')_v:=\sigma'_{C_i}$ whenever $v\in C_i$.

Let $m_{\mathrm{in}}:=m-|E'|$.  For every $\sigma'\in\{0,1\}^{V'}$, $\mono(\Lift(\sigma')) =m_{\mathrm{in}}+\mono_{G'}(\sigma')$.
The first term is independent of $\sigma'$, and hence $\Lift(\mu')=\mu^{\mathrm{Ising}}_{\beta,G}(\,\cdot\mid\mathcal E)$.
Consequently,
\begin{align*}
  \TV(\Lift(\mu'),\mu^{\mathrm{Ising}}_{\beta,G})
  =1-\mu^{\mathrm{Ising}}_{\beta,G}(\mathcal E)
  \leq\eta.
\end{align*}

The components of $(V,E_{\geq K})$, the graph $G'$, and the block-index representation of $\Lift$ require only $O(m+n)$ additional time after the edge strength estimation.
This proves the theorem.
\end{proof}

\section{Preliminaries for Markov chain analysis}
\label{sec:Markov:preliminaries}

We begin with the basic notions of finite Markov chains.
Let $\Omega$ be a finite state space.  A Markov chain $(X_t)_{t\geq0}$ on $\Omega$ is specified by a transition matrix $P$, where $P(x,y)$ is the probability of moving from $x$ to $y$ in one step. Let $P^t(x,y)$ denote the probability of moving from $x$ to $y$ in $t$ steps and write $P^t(x,\cdot)$ for the distribution of $X_t$ conditional on $X_0=x$.
A probability distribution $\pi$ on $\Omega$ is stationary if $\pi P=\pi$.
The chain is irreducible if, for every $x,y\in\Omega$, there exists $t\geq0$ such that $P^t(x,y)>0$.  The period of $x$ is the greatest common divisor of the set $\{t\geq1:P^t(x,x)>0\}$, and the chain is aperiodic if every state has period one.
Every finite irreducible and aperiodic Markov chain has a unique stationary distribution $\pi$, and $P^t(x,\cdot)$ converges to $\pi$ from every initial state $x\in\Omega$.
The chain is reversible with respect to $\pi$ if, for every pair $x,y\in\Omega$, the detailed balance condition $\pi(x)P(x,y)=\pi(y)P(y,x)$ holds.
Moreover, reversibility implies that $\pi$ is stationary.

\subsection{Covariance and the Dirichlet Form}

For functions $f,g:\Omega\to\mathbb R$, let $X\sim\pi$, write $F:=f(X)$ and $G:=g(X)$, and define $\langle f,g\rangle_\pi:=\mathbb E[FG]$.
Throughout, $\operatorname{Cov}(U,V):=\mathbb E[UV]-\mathbb E[U]\mathbb E[V]$ is taken under the probability law specified by the surrounding context, and $\operatorname{Cov}(U,V\mid Z)$ denotes conditional covariance given $Z$.
We recall the law of total covariance:
\begin{equation}\label{eqn:total:covariance}
  \operatorname{Cov}(U,V)=\mathbb E[\operatorname{Cov}(U,V\mid Z)]+\operatorname{Cov}(\mathbb E[U\mid Z],\mathbb E[V\mid Z]).
\end{equation}
Let $(U',V')$ have the same joint distribution as $(U,V)$ and be independent of $(U,V)$; we call $(U',V')$ an independent copy of $(U,V)$.
The two-replica identity is
\begin{equation}\label{eqn:two:replica:identity}
  \operatorname{Cov}(U,V)=\mathbb E[(U-U')(V-V')] / 2.
\end{equation}
Indeed, by independence, $\mathbb E[UV']=\mathbb E[U'V]=\mathbb E[U]\mathbb E[V]$, while $\mathbb E[U'V']=\mathbb E[UV]$.

Suppose that $P$ is reversible with respect to $\pi$, and regard $P$ as the operator $(Pg)(x):=\sum_{y\in\Omega}P(x,y)g(y)$.
Detailed balance implies that $P$ is self-adjoint with respect to $\langle\cdot,\cdot\rangle_\pi$, namely $\langle Pf,g\rangle_\pi=\langle f,Pg\rangle_\pi$.
The Dirichlet form associated with $P$ is defined by
\begin{equation}\label{eqn:Dirichlet:form}
  \mathcal E_P(f,g):=\frac{1}{2}\sum_{x,y\in\Omega}\pi(x)P(x,y)\bigl(f(x)-f(y)\bigr)\bigl(g(x)-g(y)\bigr)=\langle f,(I-P)g\rangle_\pi.
\end{equation}

We record the following standard identity for random-scan heat-bath dynamics.
Suppose that $\pi$ is a probability distribution on $2^E$, let $m:=|E|$, and let $P$ be its random-scan single-edge heat-bath dynamics.
The next lemma states that the Dirichlet form is the average of local changes caused by updating a single edge. Let $X \sim \pi$. {For each $e\in E$, write $X_{-e}:=X\setminus\{e\}$ for the configuration on the remaining edges.} For any functions $f,g:2^E\to\mathbb R$, define two random variables $F:=f(X)$ and $G:=g(X)$.

\begin{lemma}\label{lem:heat:bath:Dirichlet}
For all functions $f,g:2^E\to\mathbb R$,
\begin{equation}\label{eqn:heat:bath:Dirichlet}
  \mathcal E_P(f,g)=\frac{1}m\sum_{e\in E}\mathbb E\bigl[\operatorname{Cov}(F,G\mid X_{-e})\bigr].
\end{equation}
\end{lemma}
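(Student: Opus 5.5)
We decompose the heat-bath kernel edge by edge and compute each piece's Dirichlet form directly. Write $P=\frac1m\sum_{e\in E}P_e$, where $P_e$ is the kernel that resamples the state of $e$ from $\pi(\cdot\mid X_{-e})$. As an operator, $(P_eg)(x)=\mathbb E[g(X)\mid X_{-e}=x_{-e}]$, the conditional expectation onto the $\sigma$-algebra generated by $X_{-e}$. By the identity $\mathcal E_P(f,g)=\langle f,(I-P)g\rangle_\pi$ in \eqref{eqn:Dirichlet:form}, which is linear in $P$, it suffices to prove for each fixed $e$ that
\begin{equation*}
  \langle f,(I-P_e)g\rangle_\pi=\mathbb E\bigl[\operatorname{Cov}(F,G\mid X_{-e})\bigr],
\end{equation*}
and then average over $e$.

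For a fixed $e$, set $\mathcal G_e:=\sigma(X_{-e})$, so that $P_eg(X)=\mathbb E[G\mid\mathcal G_e]$. Then
\[
\langle f,P_eg\rangle_\pi=\mathbb E\bigl[F\,\mathbb E[G\mid\mathcal G_e]\bigr]=\mathbb E\bigl[\mathbb E[F\mid\mathcal G_e]\,\mathbb E[G\mid\mathcal G_e]\bigr]
\]
by the tower property, since $\mathbb E[G\mid\mathcal G_e]$ is $\mathcal G_e$-measurable. Also $\langle f,g\rangle_\pi=\mathbb E[FG]=\mathbb E\bigl[\mathbb E[FG\mid\mathcal G_e]\bigr]$. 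Subtracting gives
\[
\langle f,(I-P_e)g\rangle_\pi=\mathbb E\bigl[\mathbb E[FG\mid\mathcal G_e]-\mathbb E[F\mid\mathcal G_e]\,\mathbb E[G\mid\mathcal G_e]\bigr]=\mathbb E\bigl[\operatorname{Cov}(F,G\mid X_{-e})\bigr].
\]
Averaging over $e$ yields \eqref{eqn:heat:bath:Dirichlet}.

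The only point needing care is that $P_e$, and hence $P$, is genuinely the conditional-expectation operator, i.e.\ that $\pi$ is stationary and reversible for $P$, so that \eqref{eqn:Dirichlet:form} applies. This holds because each $P_e$ is a self-adjoint projection in $L^2(\pi)$. Concretely, given $x_{-e}$, the heat-bath move picks the state of $e$ from its conditional law independently of the current state of $e$. Zero-probability conditionings cause no trouble, because they only arise on $\pi$-null states, which do not contribute to the sums.
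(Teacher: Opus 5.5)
Your argument is correct, but it takes a different route from the paper.

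The paper starts from the pairwise-sum definition of $\mathcal E_P$ in \eqref{eqn:Dirichlet:form}. For each $e$ it writes $\pi(x_0)P_e(x_0,x_1)=\pi_{E-e}(x)\pi_e^x(c_0)\pi_e^x(c_1)$. It then recognizes $\frac12\sum_{c_0,c_1}\pi_e^x(c_0)\pi_e^x(c_1)(f(x_0)-f(x_1))(g(x_0)-g(x_1))$ as $\operatorname{Cov}(F,G\mid X_{-e}=x)$ via the two-replica identity \eqref{eqn:two:replica:identity}.

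You instead use the operator form $\langle f,(I-P)g\rangle_\pi$ and split it by linearity in $P$. You then observe that $P_e$ acts as conditional expectation given $X_{-e}$, so the tower property turns $\langle f,(I-P_e)g\rangle_\pi$ into $\mathbb E[\operatorname{Cov}(F,G\mid X_{-e})]$.

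The paper's computation is fully explicit and never needs reversibility, because the symmetry of $\pi(x_0)P_e(x_0,x_1)$ in $(c_0,c_1)$ is visible in the formula. Your route must first justify the second equality in \eqref{eqn:Dirichlet:form}. You do this correctly: each $P_e$ is a self-adjoint projection on $L^2(\pi)$, so their average $P$ is self-adjoint, i.e.\ reversible, and $\pi$-null states do not contribute. (Strictly, only each $P_e$ is a conditional expectation operator, not $P$ itself, but that is all you use.)

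In return, your argument is more conceptual. It identifies $I-P_e$ as the orthogonal projection onto functions with zero conditional mean given $X_{-e}$, after which the identity is just the tower property.
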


\begin{proof}
  For each $e\in E$, let $P_e$ be the heat-bath kernel that resamples only $e$. Then $P = \frac{1}m\sum_{e\in E}P_e$.
  View any set $A \subseteq E$ as a vector in $\{0,1\}^E$ by setting $A_e = 1$ if $e \in A$ and $A_e = 0$ otherwise.
  Then $\pi$ is a distribution over $\{0,1\}^E$. View $f,g$ as functions from $\{0,1\}^E$ to $\mathbb R$. 
  {Write $E-e:=E\setminus\{e\}$. For $x\in\{0,1\}^{E-e}$ and $c\in\{0,1\}$, let
  \begin{equation*}
    \pi_{E-e}(x):=\Pr(X_{-e}=x),\qquad
    \pi_e^x(c):=\Pr(X_e=c\mid X_{-e}=x)
  \end{equation*}
  denote the marginal and conditional probabilities, respectively. When $\pi_{E-e}(x)=0$, the conditional law $\pi_e^x$ may be chosen arbitrarily.}
  By the definition of the Dirichlet form, we have
\begin{equation*}
  \mathcal E_P(f,g)=\frac{1}{m}\sum_{e \in E}\sum_{x \in \{0,1\}^{E -e}}\pi_{E-e}(x) \cdot \frac{1}{2}\sum_{c_0,c_1 \in \{0,1\}}\pi^x_e(c_0)\pi^x_e(c_1)(f(x_0) - f(x_1))(g(x_0) - g(x_1)),
\end{equation*}
where $x_i$ is the vector in $\{0,1\}^E$ that is equal to $x$ on $E-e$ and takes the value $c_i$ on $e$. By~\eqref{eqn:two:replica:identity}, the above equals $\frac{1}{m}\sum_{e \in E}\sum_{x \in \{0,1\}^{E-e}}\pi_{E-e}(x) [\operatorname{Cov}(F,G\mid X_{-e}=x)] $, as desired.
\end{proof}

If a distribution $\nu$ is absolutely continuous with respect to $\pi$, write $h_\nu(x):=\nu(x)/\pi(x)$ for its density.
Then $\mathbb E_\nu[f]=\langle h_\nu,f\rangle_\pi$, $\mathbb E_\pi[h_\nu]=1$, and $\operatorname{Cov}(h_\nu(X),f(X))=\mathbb E_\nu[f]-\mathbb E_\pi[f]$.
If $\mu_t:=\mu P^t$ and $h_t:=h_{\mu_t}$, reversibility implies $h_{t+1}=Ph_t$.
When $\mu$ is the point mass at $z\in\Omega$, detailed balance of $P$ further shows that $h_t(x)=P^t(z,x)/\pi(x)=P^t(x,z)/\pi(z)$.

\subsection{Monotone Coupling}

Let $(\Omega,\preceq)$ be a finite partially ordered state space.
A function $f:\Omega\to\mathbb R$ is increasing if $x\preceq y$ implies $f(x)\leq f(y)$, and it is decreasing if $-f$ is increasing.

\begin{definition}\label{def:monotone:grand:coupling}
A grand coupling of a Markov kernel $P$ is a collection $(X_t^x)_{x\in\Omega}$ of chains defined on a common probability space such that, for every $x\in\Omega$, $(X_t^x)_{t\geq0}$ has transition kernel $P$ and starts from $X_0^x=x$.
The grand coupling is monotone if $x\preceq y$ implies $X_t^x\preceq X_t^y$ for every $t\geq0$.
\end{definition}

For any $x\in\Omega$ and a test function $f:\Omega\to\mathbb R$, $(P^t f)(x)=\mathbb E[f(X_t^x)]$.
Thus, a monotone grand coupling implies that $P^t f$ is increasing whenever $f$ is increasing.
For the random-cluster model, $\Omega=2^E$, the same coupling also bounds the distance to stationarity by the gap between the all-closed and all-open chains, as stated next.

\begin{lemma}\label{lem:extremal:coupling}
Let $\Omega=2^E$ be ordered by set inclusion, with $\mathbf 0:=\emptyset$ and $\mathbf 1:=E$. Let $P$ be a Markov kernel with stationary distribution $\pi$, and fix a monotone grand coupling $(X_t^x)_{x\in\Omega}$ of $P$.
For $Y_0\sim\pi$ independent of the randomness used in the grand coupling, set $Y_t:=X_t^{Y_0}$, $X_t^-:=X_t^{\mathbf 0}$, and $X_t^+:=X_t^{\mathbf 1}$.
Then, for every $A\subseteq E$ and $t\geq0$,
\begin{equation}\label{eqn:extremal:coupling}
  \TV(P^t(A,\cdot),\pi)\leq\Pr(X_t^A\neq Y_t)\leq\Pr(X_t^-\neq X_t^+)\leq\mathbb E\bigl[|X_t^+\setminus X_t^-|\bigr].
\end{equation}
\end{lemma}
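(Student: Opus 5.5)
The chain of three inequalities in \eqref{eqn:extremal:coupling} will be proved one at a time; each follows from the coupling structure.

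\textbf{First inequality.} Since $Y_0\sim\pi$ is independent of the coupling randomness, and $\pi$ is stationary for $P$, the law of $Y_t=X_t^{Y_0}$ is $\pi P^t=\pi$. Conditioning on $Y_0=y$ gives a chain with kernel $P$ started at $y$, so mixing over $y$ against $\pi$ returns $\pi$. Meanwhile $X_t^A$ has law $P^t(A,\cdot)$. Because $(X_t^A,Y_t)$ is a coupling of these two laws, the coupling inequality $\TV(\mu,\nu)\le\Pr(U\ne V)$ yields $\TV(P^t(A,\cdot),\pi)\le\Pr(X_t^A\ne Y_t)$.

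\textbf{Second inequality.} Use monotonicity. Every $x\in 2^E$ satisfies $\mathbf 0\preceq x\preceq\mathbf 1$, and the grand coupling is monotone. Hence pointwise on the probability space
\[
X_t^-\subseteq X_t^A\subseteq X_t^+ \quad\text{and}\quad X_t^-\subseteq Y_t\subseteq X_t^+ .
\]
The second sandwich holds because $Y_t=X_t^{Y_0}$ with $Y_0$ some fixed set once conditioned. So on the event $X_t^-=X_t^+$, both $X_t^A$ and $Y_t$ equal this common value. This gives $\{X_t^A\ne Y_t\}\subseteq\{X_t^-\ne X_t^+\}$. One small care point: the grand coupling randomness must be defined independently of $Y_0$, so that the sandwich holds for every realization of $Y_0$. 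The hypothesis provides exactly this.

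\textbf{Third inequality.} Since $X_t^-\subseteq X_t^+$, we have $X_t^-\ne X_t^+$ iff $|X_t^+\setminus X_t^-|\ge1$. Markov's inequality for the nonnegative integer variable $|X_t^+\setminus X_t^-|$ then gives $\Pr(X_t^-\ne X_t^+)\le\mathbb E|X_t^+\setminus X_t^-|$.

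No step is a real obstacle. The only point needing attention is the first inequality: verifying that $Y_t\sim\pi$ requires the independence of $Y_0$ from the coupling, together with the fact that each $X^x_t$ individually has kernel $P$.
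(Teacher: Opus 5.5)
Your proposal is correct and matches the paper's proof step for step. The three steps are the same: the coupling inequality using $Y_t\sim\pi$; the monotone sandwiches $X_t^-\subseteq X_t^A\subseteq X_t^+$ and $X_t^-\subseteq Y_t\subseteq X_t^+$; and taking expectations in $\mathbf 1\{X_t^-\neq X_t^+\}\le|X_t^+\setminus X_t^-|$. One small correction to your remark: independence of $Y_0$ is needed only to get $Y_t\sim\pi$, not for the sandwich, since monotonicity holds pathwise for all starting states simultaneously.
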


\begin{proof}
Since $Y_t\sim\pi$, the first inequality is the coupling inequality.
The inclusions $X_t^-\subseteq X_t^A\subseteq X_t^+$ and $X_t^-\subseteq Y_t\subseteq X_t^+$ prove the second inequality.
Finally, taking expectations in $\mathbf 1\{X_t^-\neq X_t^+\}\leq|X_t^+\setminus X_t^-|$ proves the last inequality.
\end{proof}

\section{Mixing time of random-cluster Glauber dynamics}

After using the sparsification in \Cref{thm:Ising:sparsification}, we get a multigraph $G$.
Any loop corresponds to an independent random variable in the Gibbs distribution, and we can sample it separately.
Thus, throughout this section, we assume that $G=(V,E)$ is connected and loopless.
Let $n:=|V|$, $m:=|E|$, $\pi:=\mu^{\mathrm{RC}}_{p,G}$, and $\rho:=p/(2-p)$.

We first use a monotone grand coupling to reduce the mixing-time bound to exponential decay of the two extremal open-edge biases.
We then state a monotone edge-count Poincar\'e inequality, use it to prove this decay, and finally prove the inequality.

\subsection{Mixing from the Extremal Coupling}

We begin by placing the chains from all initial configurations on the same probability space.
The order-preserving update then allows every chain to be compared with the two extremal chains.

\begin{lemma}\label{lem:RC:monotone:grand:coupling}
The random-cluster Glauber dynamics admits a monotone grand coupling $\{X_t^A\}$, indexed by $A\subseteq E$ and $t\geq0$, with $X_0^A=A$.
\end{lemma}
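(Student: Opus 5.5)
We build the coupling from shared randomness: a sequence of i.i.d. pairs $(e_t,U_t)$, where $e_t$ is a uniformly random edge of $E$ and $U_t$ is uniform on $[0,1]$, independent of everything else. For every $A\subseteq E$ we set $X_0^A:=A$. Given $X_t^A=X$ and $e_{t+1}=e=uv$, we define
\[
X_{t+1}^A:=\begin{cases} X\cup\{e\}, & \text{if } U_{t+1}\le p'(X\setminus\{e\},e),\\ X\setminus\{e\}, & \text{otherwise,}\end{cases}
\]
where $p'(X\setminus\{e\},e)$ is the heat-bath probability from the definition of the dynamics: it equals $p$ if $u$ and $v$ are connected in $(V,X\setminus\{e\})$, and $\frac{p}{p+2(1-p)}$ otherwise. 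For fixed $A$, the pair $(e_{t+1},U_{t+1})$ is independent of $X_t^A$. The one-step law is therefore exactly the kernel $P$, so each $(X_t^A)_{t\ge0}$ is a copy of the Glauber dynamics started at $A$.

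By induction on $t$, it suffices to show that a single update preserves inclusion. Suppose $X\subseteq Y$ and both are updated with the same $(e,U)$. On $E\setminus\{e\}$ the two configurations are unchanged, so inclusion persists there. On the edge $e$, we need $p'(X\setminus\{e\},e)\le p'(Y\setminus\{e\},e)$. Connectivity is increasing, so if $u$ and $v$ are connected in $(V,X\setminus\{e\})$, they are also connected in $(V,Y\setminus\{e\})$. It remains to compare the two possible values. Since $p+2(1-p)=2-p\ge 1$, we have $\frac{p}{2-p}\le p$. Hence $p'$ is monotone in the configuration. It follows that whenever $e$ is added to $X$ it is also added to $Y$, so $X_{t+1}\subseteq Y_{t+1}$.

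The only point needing real care is this monotonicity of the acceptance threshold. It relies on $q=2\ge1$, the factor appearing in $\mu^{\mathrm{RC}}_{p,G}$, which makes the model positively associated. With $q<1$ the inequality would reverse. Everything else is bookkeeping: all chains share the same randomness, and the marginal law of each chain is correct.
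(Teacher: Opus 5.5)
Your proposal is correct and follows essentially the same route as the paper: all chains share the uniformly chosen edge and the uniform threshold variable, and inclusion is preserved because connectivity of $u$ and $v$ is increasing in the configuration and $\rho=p/(2-p)\le p$. Your explicit checks that each marginal chain has kernel $P$ and that the induction closes are exactly the bookkeeping the paper leaves implicit.
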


\begin{proof}
Consider the following implementation of the transition of Glauber dynamics. Given the current set $S$ of open edges, sample $e \in E$ uniformly at random. Calculate the conditional probability of opening $e$ given $S\setminus \{e\}$. 
This would be either $p$ or $\rho$, depending on whether the endpoints of $e$ are connected in $S\setminus \{e\}$.
Call the threshold $p_0$.
Then, sample a random number $r \in [0,1]$ uniformly at random. If $r \leq p_0$, then set $e$ to open. Otherwise, close $e$.
Couple all chains by using the same uniformly chosen edge and the same independent uniform random number at every step.
If $A\subseteq B$ and the chosen edge is $e=uv$, then $A\setminus\{e\}\subseteq B\setminus\{e\}$, so connectivity of $u$ and $v$ in the first configuration implies connectivity in the second.
The conditional probability of opening $e$ is therefore no larger for $A$ than for $B$, because it is $\rho$ when the endpoints are disconnected and $p$ when they are connected, with $\rho\leq p$.
Thus, the common update preserves inclusion, and induction shows that $X_t^A\subseteq X_t^B$ whenever $A\subseteq B$.
\end{proof}

Let $\mathbf 0:=\emptyset$ and $\mathbf 1:=E$, and write $X_t^-:=X_t^{\mathbf 0}$ and $X_t^+:=X_t^{\mathbf 1}$ for the all-closed and all-open chains in this coupling.
Define their lower and upper open-edge biases by
\begin{equation}\label{eqn:RC:extremal:biases}
  \Delta_t^-:=\mathbb E_{X\sim\pi}[|X|]-\mathbb E[|X_t^-|],\qquad \Delta_t^+:=\mathbb E[|X_t^+|]-\mathbb E_{X\sim\pi}[|X|].
\end{equation}
Under a coupling of two extremal chains $(X_t^-)_{t\geq0},(X_t^+)_{t\geq0}$ with a stationary chain $(Y_t)_{t\geq0}$ with $Y_0 \sim \pi$, $X_t^-\subseteq Y_t\subseteq X_t^+$, and hence $\Delta_t^-,\Delta_t^+\geq0$.
The main ingredient in the proof of \Cref{thm:RC:Glauber:mixing} is to show that these biases decay exponentially in $t$ with a rate of $1/(2nm)$, as stated in the following lemma.

\begin{lemma}\label{lem:RC:extremal:bias:decay}
For every integer $t\geq0$,
\begin{equation}\label{eqn:RC:extremal:bias:decay}
  \Delta_t^-\leq m\exp\left(-\frac{t}{2nm}\right)\qquad\text{and}\qquad \Delta_t^+\leq m\exp\left(-\frac{t}{2nm}\right).
\end{equation}
\end{lemma}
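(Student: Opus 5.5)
The plan is to express each bias as a covariance against the stationary measure, show that one step of the dynamics lowers it by the corresponding Dirichlet form, and then use the monotone edge-count Poincar\'e inequality announced in the introduction to compare the two. Let $g(X):=|X|$. For the lower chain, let $\mu_t$ be the law of $X_t^-$ and $h_t:=h_{\mu_t}$ its density with respect to $\pi$. By the identities in \Cref{sec:Markov:preliminaries},
\[
\Delta_t^-=\mathbb E_\pi[g]-\mathbb E_{\mu_t}[g]=-\operatorname{Cov}_\pi(h_t,g).
\]
Since $h_t(x)=P^t(x,\mathbf 0)/\pi(\mathbf 0)$, the function $f_t:=-h_t$ is increasing. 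Indeed, the monotone grand coupling of \Cref{lem:RC:monotone:grand:coupling} shows that $x\mapsto\Pr(X^x_t=\mathbf 0)$ is decreasing, because $X_t^x\subseteq X_t^y$ for $x\subseteq y$. Hence $\Delta_t^-=\operatorname{Cov}_\pi(f_t,g)$ with $f_t$ increasing. Symmetrically, $\Delta_t^+=\operatorname{Cov}_\pi(h^+_t,g)$ with $h^+_t(x)=P^t(x,\mathbf 1)/\pi(\mathbf 1)$, which is increasing.

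Next, $h_{t+1}=Ph_t$ and $P$ is self-adjoint, so
\[
\Delta_t^--\Delta_{t+1}^-=\langle f_t,(I-P)g\rangle_\pi=\mathcal E_P(f_t,g).
\]
The same identity holds for $\Delta^+$. It therefore suffices to prove the following monotone edge-count Poincar\'e inequality: for every increasing $f$,
\[
\operatorname{Cov}_\pi(f,g)\le 2n\,m\,\mathcal E_P(f,g)=2n\sum_{e}\mathbb E\bigl[\operatorname{Cov}(F,G\mid X_{-e})\bigr],
\]
where the equality is \Cref{lem:heat:bath:Dirichlet}. Given this, $\Delta_{t+1}\le(1-\tfrac1{2nm})\Delta_t$. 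Since $\Delta_0\le m$, we get $\Delta_t\le m e^{-t/(2nm)}$, which is \eqref{eqn:RC:extremal:bias:decay}. Note that each term $\operatorname{Cov}(F,G\mid X_{-e})$ is nonnegative for increasing $f$, because given $X_{-e}$ both $F$ and $G$ are increasing in the single bit $X_e$.

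The main obstacle is proving the inequality itself. Following the outline, I would generate $X\sim\pi$ via the Grimmett--Janson coupling. First sample a random even subgraph $\Theta$ with weights proportional to $\lambda^{|\Theta|}$, with $\lambda$ depending on $p$. Then open each edge of $E\setminus\Theta$ independently with a fixed probability $q$, and open every edge of $\Theta$. Applying the law of total covariance \eqref{eqn:total:covariance} with $Z=\Theta$ splits $\operatorname{Cov}(F,G)$ into two parts.

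The first part is $\mathbb E[\operatorname{Cov}(F,G\mid\Theta)]$. Given $\Theta$, the quantity $G$ is a sum of independent Bernoullis, so this part is $\sum_e\mathbb E[\operatorname{Cov}(F,\mathbf 1_{e}\mid\Theta)]$. Each summand should be bounded by a constant times $\mathbb E[\operatorname{Cov}(F,G\mid X_{-e})]$, using that the heat-bath probabilities lie between $\rho$ and $p$.

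The second part is $\operatorname{Cov}(\mathbb E[F\mid\Theta],\mathbb E[G\mid\Theta])$. I would apply the two-replica identity \eqref{eqn:two:replica:identity} with an independent $\Theta'$. The difference $\Theta\triangle\Theta'$ is even, so it decomposes into edge-disjoint cycles, and we can pass from $\Theta$ to $\Theta'$ by flipping edges one at a time along these cycles. This telescopes $\mathbb E[F\mid\Theta]-\mathbb E[F\mid\Theta']$ into single-edge differences. Each cycle has at most $n$ edges, and this is where the factor $n$ should enter. Each single-edge flip difference would then be compared to the local covariance at that edge, by monotonicity of $f$ and a change of measure with bounded density ratio. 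I expect the bookkeeping to deliver exactly the constant $2n$ to be the hardest step. Of all the pieces, this is the one I am least sure closes as stated.
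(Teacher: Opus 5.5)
Your reduction is correct and is the paper's proof: the biases are $\operatorname{Cov}_\pi(h_t^+,|\cdot|)$ and $\operatorname{Cov}_\pi(-h_t^-,|\cdot|)$ with increasing functions (by detailed balance and the monotone grand coupling), their one-step decrease equals $\mathcal E_P(\cdot,|\cdot|)$ by self-adjointness and $h_{t+1}=Ph_t$, and the monotone edge-count Poincar\'e inequality together with $\Delta_0^\pm\le m$ gives the geometric decay. In the paper that inequality is the separate Lemma~\ref{lem:RC:monotone:Poincare}, proved afterwards along the lines of your outline, so you may simply invoke it here; if you do pursue it, your sketch is missing two ingredients: the per-edge switching identity \eqref{eqn:cycle:flip}, which confines the telescoping to the single cycle through each signed edge rather than all of $\Theta\oplus\Theta'$, and Lemma~\ref{lem:RC:ES:lift}, which bounds the lifted weight of each boundary class $\partial S$ of near-even intermediates by $\pi$, with the factor $2n-2$ being the number of admissible classes.
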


We first prove \Cref{thm:RC:Glauber:mixing} assuming \Cref{lem:RC:extremal:bias:decay} and then prove the lemma.
\begin{proof}[Proof of \Cref{thm:RC:Glauber:mixing}]
Let $A\subseteq E$ be arbitrary.
By \Cref{lem:extremal:coupling}, \eqref{eqn:RC:extremal:biases}, and \Cref{lem:RC:extremal:bias:decay},
\begin{equation*}
  \TV(P^t(A,\cdot),\pi)\leq\mathbb E\bigl[|X_t^+\setminus X_t^-|\bigr]=\Delta_t^++\Delta_t^-\leq2m\exp\left(-\frac{t}{2nm}\right).
\end{equation*}
Choosing $t=\left\lceil2nm\log(2m/\eps)\right\rceil$ makes this bound at most $\eps$ uniformly over all $A\subseteq E$, which proves the claimed mixing-time bound by the definition of $\Tmix(P,\eps)$.
\end{proof}

It remains to prove \Cref{lem:RC:extremal:bias:decay}.
The central ingredient is a bilinear Poincar\'e inequality between monotone functions and the edge-count observable, or a monotone edge-count Poincar\'e inequality for short.
It controls a global covariance with $|X|$ by single-edge conditional covariances.
The left-hand side measures the correlation between a monotone function $f(X)$ and the total number of open edges, whereas each term on the right isolates the correlation contributed by one edge after all other edges are fixed.

\begin{lemma}[Monotone edge-count Poincar\'e inequality]\label{lem:RC:monotone:Poincare}
Let $X\sim\pi$.
For every increasing function $f:2^E\to\mathbb R$, let $F:=f(X)$. Then,
\begin{equation}\label{eqn:RC:monotone:Poincare}
  \operatorname{Cov}(F,|X|)\leq2n\sum_{e\in E}\mathbb E_\pi\bigl[\operatorname{Cov}(F,|X|\mid X_{-e})\bigr].
\end{equation}
\end{lemma}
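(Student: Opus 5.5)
The plan is to prove \eqref{eqn:RC:monotone:Poincare} by representing $X\sim\pi$ through the Grimmett--Janson coupling \cite{GJ09}, splitting the covariance with the law of total covariance \eqref{eqn:total:covariance}, and bounding the two pieces separately.

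\textbf{Setup.} First sample a random even subgraph $\eta\subseteq E$ with law proportional to $\rho^{|\eta|}$, where $\rho=p/(2-p)$ and $\eta$ ranges over the even subgraphs. Then let $X:=\eta\cup B$, where $B$ opens each edge of $E\setminus\eta$ independently with probability $\rho$. I will first check that $X\sim\pi$; this is the $q=2$ Grimmett--Janson statement, and the parameter $\rho$ should be verified by a direct computation. Throughout write $D_eF:=f(X\cup\{e\})-f(X\setminus\{e\})\geq0$. Since $|X|$ is affine in $\mathbf 1\{e\in X\}$ once $X_{-e}$ is fixed,
\[
\operatorname{Cov}(F,|X|\mid X_{-e})=q_e(1-q_e)\,D_eF ,
\]
where $q_e\in\{p,\rho\}$ is the conditional opening probability. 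So the right-hand side of \eqref{eqn:RC:monotone:Poincare} is $2n\sum_e\mathbb E[q_e(1-q_e)D_eF]$, a sum of nonnegative local terms. Conditioning on $\eta$ in \eqref{eqn:total:covariance} gives
\[
\operatorname{Cov}(F,|X|)=\mathbb E[\operatorname{Cov}(F,|X|\mid\eta)]+\operatorname{Cov}(g(\eta),\rho m+(1-\rho)|\eta|),
\]
where $g(\eta):=\mathbb E[F\mid\eta]$.

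\textbf{Lift part.} Conditionally on $\eta$, the variable $X$ is a product measure on $E\setminus\eta$. Hence
\[
\operatorname{Cov}(F,|X|\mid\eta)=\sum_{e\notin\eta}\rho(1-\rho)\,\mathbb E[D_eF\mid\eta].
\]
I then compare $\mathbb E[\rho(1-\rho)\mathbf 1\{e\notin\eta\}D_eF]$ with $\mathbb E_\pi[q_e(1-q_e)D_eF]$. Because $D_eF$ depends only on $X_{-e}$, it suffices to compare, for fixed $X_{-e}$, the quantity $\rho(1-\rho)\Pr(e\notin\eta\mid X_{-e})$ with $q_e(1-q_e)$. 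I expect $\Pr(e\notin\eta\mid X_{-e})$ to be computable explicitly, roughly $1-q_e$ when $e$ is closed and otherwise an explicit fraction. If so, this part costs only a constant factor, which should fit inside $2n$ with $n\geq2$.

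\textbf{Even-subgraph part.} This is the main obstacle. Using the two-replica identity \eqref{eqn:two:replica:identity} with an independent copy $\eta'$,
\[
\operatorname{Cov}(g(\eta),|\eta|)=\tfrac12\,\mathbb E\bigl[(g(\eta)-g(\eta'))(|\eta|-|\eta'|)\bigr].
\]
The symmetric difference $\eta\triangle\eta'$ is even, so it decomposes into edge-disjoint cycles, each of length at most $n$. The key steps are as follows.
\begin{itemize}
  \item Because the weight is $\rho^{|\eta|}$, I plan to symmetrize and move edges of $\eta\setminus\eta'$ to $\eta'$ cycle by cycle, in the Jerrum--Sinclair style. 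This should reduce the expression to a sum over edges $e$ of $\rho$-weighted increments of $g$ when $e$ is added to an even subgraph plus a cycle remainder.
  \item Each such increment should be expressed, via the lift, as an expectation of $D_eF$ at a random-cluster configuration. Monotonicity of $f$ then makes every term nonnegative, so I can bound crudely without tracking signs.
  \item A fixed edge $e$ is charged once per cycle through it. The replacement $\eta\mapsto\eta\triangle C$ preserves evenness and changes the weight by an explicit factor, and each cycle has length at most $n$. Together these should give the factor $n$ (and $2n$ after the halving/symmetrization) rather than any congestion bound.
\end{itemize}

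What I would try first is to prove the single-cycle identity: the contribution of one cycle $C$ equals a sum over $e\in C$ of $\mathbb E_\pi[q_e(1-q_e)D_eF\cdots]$ with a weight at most $|C|\leq n$. The delicate point is that the intermediate configurations along a cycle are not even, so the edge-by-edge interpolation must be done at the level of $X$ (adding edges to the lift) rather than at the level of $\eta$.
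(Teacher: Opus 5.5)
Your setup (Grimmett--Janson lift, law of total covariance, $\operatorname{Cov}(F,|X|\mid X_{-e})=q_e(1-q_e)D_eF$) and your lift part match the paper. There, the crude bound $\Pr(e\notin\eta\mid X_{-e})\le 1$ together with $\rho(1-\rho)\le 2q_e(1-q_e)$ already gives coefficient $2$. The gap is the even-subgraph term, which is the whole difficulty, and your plan has no working mechanism for it.

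First, a one-replica replacement $\eta\mapsto\eta\oplus C$ changes the weight by $\rho^{|C\setminus\eta|-|C\cap\eta|}$. This can be of order $\rho^{-|C|}$, so no constant pays for it. The paper instead flips the cycle in \emph{both} replicas. Take $D=I\oplus J$ with a fixed oriented cycle decomposition, and let $\sigma_e=\pm1$ according to whether $e\in I$ or $e\in J$. Both the swap $(I,J)\mapsto(J,I)$ and the map $(I,J)\mapsto(I\oplus Q_e,J\oplus Q_e)$ preserve $\rho^{|I|+|J|}$ exactly and reverse $\sigma_e$. Hence
\[
\sum_{I\oplus J=D}\rho^{|I|+|J|}\sigma_e\bigl(f(I)-f(J)\bigr)=\sum_{I\oplus J=D}\rho^{|I|+|J|}\sigma_e\bigl(f(I)-f(I\oplus Q_e)\bigr),
\]
so only single-cycle differences remain, with no loss in the weights.

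Second, your conclusion that the interpolation ``must be done at the level of $X$'' because the intermediate configurations are not even is the wrong turn. The paper telescopes at the level of the even subgraph, applied to $\Phi_f(S)=\sum_A K(S,A)f(A)$, which is defined and increasing on all of $2^E$. It flips $Q_e$ in its orientation starting from $e$, so every intermediate $S$ has $|\partial S|\le 2$.

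What makes these odd-boundary states harmless is the extended Grimmett--Janson identity
\[
\tfrac{1}{Z_0}\sum_{S\in\Omega_B}\rho^{|S|}K(S,A)=\mathbf 1\{\text{every component of }(V,A)\text{ meets }B\text{ evenly}\}\,\pi(A)\le\pi(A),
\]
together with its corollary $Z_B\le Z_0$. Your plan never mentions either. The injective encoding $(S,U,a)$ uses $U=S\oplus D$ and $\partial U=\partial S$. With it, the total weight of partners $U$ routed through a fixed flip $(S,a)$ is at most $Z_{\partial S}\le Z_0$ (or $\rho Z_0$). This is what replaces any congestion bound. Your heuristic ``each edge is charged once per cycle and cycles have length $\le n$'' gives no control of this routed weight.

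The factor $n$ then comes from lifting via the identity above and counting sectors. For a flipped edge $e=uv$ there are at most $2n-2$ admissible sets $B\in\{\emptyset,\{u,v\},\{u,w\},\{v,w\}\}$. Combined with $(1-\rho)^2q_e\le q_e(1-q_e)$, this gives $C_{\mathrm{ES}}\le(2n-2)\sum_e\mathbb E_\pi[q_e(1-q_e)D_eF]$.

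The constants must also add up. Your target of ``$2n$'' for this term, plus the lift constant, would overshoot the required $2n$.
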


By \Cref{lem:heat:bath:Dirichlet}, we can rewrite the right-hand side of \eqref{eqn:RC:monotone:Poincare} as $2nm\cdot\mathcal E_P(f,|\cdot|)$, where $\mathcal E_P$ is the Dirichlet form of the Glauber dynamics.
This makes it similar to a standard Poincar\'e inequality, except that instead of a single function $f$ on both coordinates, we have a monotone function $f$ and the edge-count observable $|\cdot|$.
We note that one may also hope for a bilinear Poincar\'e inequality with both $f$ and $g$ monotone, but this is false in general.
It is important that one of the two functions is the edge-count observable.

Next we prove \Cref{lem:RC:extremal:bias:decay} assuming the monotone edge-count Poincar\'e inequality.

\begin{proof}[Proof of \Cref{lem:RC:extremal:bias:decay}]
Let $X\sim\pi$.
For $A\subseteq E$, define $h_t^+(A):=P^t(\mathbf 1,A)/\pi(A)$ and $h_t^-(A):=P^t(\mathbf 0,A)/\pi(A)$, namely, the densities of $X_t^+$ and $X_t^-$ with respect to $\pi$.
For the all-open chain, the detailed balance condition implies $h_t^+(A)=P^t(A,\mathbf 1)/\pi(\mathbf 1)$.
From this, we can see that $h_t^+$ is increasing in $A$ under set inclusion. 
If $A\subseteq B$, then by the monotone grand coupling, when the chain from $A$ reaches the all-open configuration, the chain from $B$ must also be at the all-open configuration.
Thus, $P^t(A,\mathbf 1)\leq P^t(B,\mathbf 1)$ and monotonicity follows.

Since $\mathbb E_\pi[h_t^+]=1$, we have $\Delta_t^+=\operatorname{Cov}(h_t^+(X),|X|)$.
Therefore, combining \Cref{lem:RC:monotone:Poincare} with \Cref{lem:heat:bath:Dirichlet} implies that
\begin{equation*}
  \Delta_t^+\leq2n\sum_{e\in E}\mathbb E\bigl[\operatorname{Cov}(h_t^+(X),|X|\mid X_{-e})\bigr]=2nm \cdot \mathcal E_P(h_t^+,|\cdot|).
\end{equation*}
By self-adjointness of $P$ and the identity $h_{t+1}^+=Ph_t^+$,
\begin{equation*}
  \mathcal E_P(h_t^+,|\cdot|)=\langle(I-P)h_t^+,|\cdot|\rangle_\pi=\langle h_t^+-h_{t+1}^+,|\cdot|\rangle_\pi=\Delta_t^+-\Delta_{t+1}^+.
\end{equation*}
Consequently, $\Delta_{t+1}^+\leq(1-1/(2nm))\Delta_t^+$.
Since $\Delta_0^+=m-\mathbb E_\pi[|X|]\leq m$, iteration shows that
\begin{equation*}
  \Delta_t^+\leq m\left(1-\frac{1}{2nm}\right)^t\leq m\exp\left(-\frac{t}{2nm}\right).
\end{equation*}

For the all-closed chain, detailed balance similarly shows that $h_t^-$ is decreasing, so $-h_t^-$ is increasing.
Moreover, $\Delta_t^-=\operatorname{Cov}(-h_t^-(X),|X|)$ and $\mathcal E_P(-h_t^-,|\cdot|)=\Delta_t^--\Delta_{t+1}^-$.
Applying \Cref{lem:RC:monotone:Poincare} to $-h_t^-$ and using this Dirichlet identity shows that $\Delta_{t+1}^-\leq(1-1/(2nm))\Delta_t^-$; since $\Delta_0^-=\mathbb E_\pi[|X|]\leq m$, iteration proves the lower-bias bound.
\end{proof}

\subsection{Even-Subgraph Model}

To prove the monotone edge-count Poincar\'e inequality, we first introduce the even-subgraph model (also known as the subgraph world \cite{JS93} or high-temperature expansion).
Recall that $\pi:=\mu^{\mathrm{RC}}_{p,G}$ and $\rho:=p/(2-p)$.
For $S\subseteq E$, let $\partial S$ be the set of vertices of odd degree in $(V,S)$, where parallel edges are counted with multiplicity.
For $B\subseteq V$, define
\begin{equation*}
  \Omega_B:=\{S\subseteq E:\partial S=B\},\qquad Z_B:=\sum_{S\in\Omega_B}\rho^{|S|},
\end{equation*}
and abbreviate $Z_0:=Z_\emptyset$.
The even-subgraph distribution is supported on $\Omega_\emptyset$ and defined by
\begin{equation*}
  \forall S\in\Omega_\emptyset, \quad \nu_0(S):=\frac{\rho^{|S|}}{Z_0}.
\end{equation*}
The even-subgraph model and the random-cluster model are related via the following coupling by Grimmett and Janson \cite{GJ09}.

\begin{proposition}
  \label{prop:Grimmett:Janson:coupling}
  Sample $S\sim\nu_0$.
  Form $R\subseteq E$ by including every edge of $S$ and including each edge of $E\setminus S$ independently with probability $\rho$.
  Then $R\sim\pi$.
\end{proposition}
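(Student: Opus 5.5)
The plan is a direct computation of the law of $R$: show that $\Pr(R=A)$ is proportional to the random-cluster weight of $A$. Since both sides are probability distributions on $2^E$, the normalising constants then agree automatically.

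Fix $A\subseteq E$. The event $R=A$ happens exactly when two things hold. First, $S\subseteq A$. Second, the independent $\rho$-coin flips on $E\setminus S$ open precisely the edges of $A\setminus S$ and close all edges of $E\setminus A$. Summing over $S$ gives
\begin{equation*}
  \Pr(R=A)=\sum_{S\in\Omega_\emptyset,\,S\subseteq A}\frac{\rho^{|S|}}{Z_0}\,\rho^{|A|-|S|}(1-\rho)^{m-|A|}
  =\frac{\rho^{|A|}(1-\rho)^{m-|A|}}{Z_0}\cdot\#\{S\subseteq A:\partial S=\emptyset\}.
\end{equation*}
The factor $\rho^{|S|}$ cancels, which is the whole point of the construction.

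The key step is counting the even subgraphs of $(V,A)$. View subsets of $A$ as vectors in $\mathbb F_2^{A}$. The map $S\mapsto$ (vector of vertex degrees mod $2$) is $\mathbb F_2$-linear, since it is the incidence matrix over $\mathbb F_2$. Its kernel is the cycle space of $(V,A)$.

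On each connected component, the image of this map is the set of vertex-vectors with even total weight. Hence the rank equals $n-\kappa(A)$, and the kernel has dimension $|A|-n+\kappa(A)$. Parallel edges cause no problem, because they are simply distinct coordinates; loops do not occur because $G$ is loopless. Therefore
\[
\#\{S\subseteq A:\partial S=\emptyset\}=2^{|A|-n+\kappa(A)}.
\]

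Substituting this count gives
\[
\Pr(R=A)\propto (2\rho)^{|A|}(1-\rho)^{m-|A|}2^{\kappa(A)}.
\]
It remains to match the ratio of the two edge factors with the random-cluster ratio $p/(1-p)$. Using $\rho=p/(2-p)$, we have $2\rho=2p/(2-p)$ and $1-\rho=2(1-p)/(2-p)$. Hence $2\rho/(1-\rho)=p/(1-p)$.

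Consequently $\Pr(R=A)\propto p^{|A|}(1-p)^{m-|A|}2^{\kappa(A)}$, with a proportionality constant independent of $A$. Since both this expression (after normalisation) and $\Pr(R=\cdot)$ sum to one, $R\sim\pi$.

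The only nontrivial ingredient is the cycle-space dimension count. I expect it to be routine, but I will state the rank argument carefully for multigraphs.
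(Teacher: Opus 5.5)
Your proof is correct and is essentially the standard argument. The paper does not reprove this proposition (it cites Grimmett--Janson), but the proof of Lemma~\ref{lem:RC:ES:lift} performs exactly your computation: each contributing $S\subseteq R$ has the same weight $\rho^{|R|}(1-\rho)^{m-|R|}$, and the $\mathbb F_2$ incidence-matrix rank $n-\kappa(R)$ gives $2^{|R|-n+\kappa(R)}$ even subgraphs. Your extra check $2\rho/(1-\rho)=p/(1-p)$ is precisely the step the paper leaves to the citation when it concludes $\sum_{S\in\Omega_\emptyset}\nu_0(S)K(S,R)=\pi(R)$, so your version makes that step self-contained.
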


For $S,R\subseteq E$, the lifting kernel in \Cref{prop:Grimmett:Janson:coupling} is
\begin{equation}\label{eqn:RC:lift:kernel}
  K(S,R):=\mathbf 1\{S\subseteq R\} \cdot \rho^{|R\setminus S|}(1-\rho)^{m-|R|}.
\end{equation}
The following identity extends this coupling to the families $\Omega_B$ defined above.

\begin{lemma}
\label{lem:RC:ES:lift}
For every $B\subseteq V$ and $R\subseteq E$,
\begin{equation}\label{eqn:RC:ES:lift}
  \frac{1}{Z_0}\sum_{S\in\Omega_B}\rho^{|S|}K(S,R)
  =\mathbf 1\left\{\begin{gathered}
    \text{every component of $(V,R)$}\\
    \text{contains an even number of vertices of $B$}
  \end{gathered}\right\} \cdot \pi(R).
\end{equation}
\end{lemma}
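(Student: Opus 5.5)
Fix $R\subseteq E$ and expand the left-hand side of \eqref{eqn:RC:ES:lift}. By \eqref{eqn:RC:lift:kernel}, only $S\subseteq R$ contribute, and for such $S$
\[
\rho^{|S|}K(S,R)=\rho^{|S|}\rho^{|R|-|S|}(1-\rho)^{m-|R|}=\rho^{|R|}(1-\rho)^{m-|R|}.
\]
This weight does not depend on $S$. Hence the left-hand side equals
\[
\frac{\rho^{|R|}(1-\rho)^{m-|R|}}{Z_0}\cdot N_B(R),\qquad N_B(R):=\bigl|\{S\subseteq R:\partial S=B\}\bigr|.
\]
The proof then has two steps: count $N_B(R)$ exactly, and check that the prefactor matches $\pi(R)$.

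\emph{Counting.} Work over $\mathbb F_2$, with $M_R:\mathbb F_2^R\to\mathbb F_2^V$ the incidence (boundary) map, so that $\partial S=M_R(\mathbf 1_S)$. Then $N_B(R)$ is the number of preimages of $\mathbf 1_B$. If $\mathbf 1_B$ is in the image, this number equals the kernel size, which is the cycle space of $(V,R)$ and has $2^{|R|-n+\kappa(R)}$ elements. The image of $M_R$ consists of exactly the vectors with even sum on each component of $(V,R)$. The forward inclusion holds because each edge contributes $2$ to its component's sum. For equality, compare dimensions: the image has dimension $n-\kappa(R)$, and so does the space of component-wise even vectors. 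Therefore
\[
N_B(R)=\mathbf 1\{\text{every component of $(V,R)$ has $|B\cap C|$ even}\}\cdot 2^{|R|-n+\kappa(R)}.
\]

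\emph{Matching the prefactor.} It remains to show
\[
\frac{\rho^{|R|}(1-\rho)^{m-|R|}2^{|R|-n+\kappa(R)}}{Z_0}=\pi(R).
\]
Using $\rho=p/(2-p)$, we have $2\rho=2p/(2-p)$ and $1-\rho=2(1-p)/(2-p)$. Substituting, the left side becomes
\[
(2-p)^{-m}\,2^{m-n}\,p^{|R|}(1-p)^{m-|R|}\,2^{\kappa(R)}/Z_0.
\]
This is proportional to $\pi(R)$ with a proportionality constant $c=(2-p)^{-m}2^{m-n}/(Z_0Z_{\mathrm{RC}}^{-1})$ that does not depend on $R$.

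To show $c=1$, take $B=\emptyset$ and sum the identity over all $R$. The left-hand side of \eqref{eqn:RC:ES:lift} sums to $\frac1{Z_0}\sum_{S\in\Omega_\emptyset}\rho^{|S|}\sum_R K(S,R)=1$, since each $K(S,\cdot)$ is a probability distribution. The indicator is identically $1$ when $B=\emptyset$, so the right-hand side sums to $c$. Hence $c=1$. Equivalently, this is the Grimmett--Janson coupling of \Cref{prop:Grimmett:Janson:coupling}, and it yields the identity $Z_{\mathrm{RC}}=(2-p)^{-m}2^{m-n}Z_0$.

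The only step that needs care is the image characterization in the counting argument, namely the dimension count for the incidence map. Everything else is algebra.
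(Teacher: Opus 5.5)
Your proof is correct. The first half coincides with the paper's. Only $S\subseteq R$ contribute, each with the same weight $\rho^{|R|}(1-\rho)^{m-|R|}$. The $\mathbb F_2$ incidence map then shows that the number of such $S$ with $\partial S=B$ is $2^{|R|-n+\kappa(R)}$ or $0$, according to the parity condition. Your image characterization handles in one step what the paper splits into a handshaking-lemma obstruction and a rank-based existence claim.

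The routes diverge at the last step. The paper observes that the resulting expression does not depend on $B$, so it equals its value at $B=\emptyset$, namely $\sum_{S\in\Omega_\emptyset}\nu_0(S)K(S,R)$. It then invokes the Grimmett--Janson coupling (\Cref{prop:Grimmett:Janson:coupling}) to identify this with $\pi(R)$. You instead substitute $\rho=p/(2-p)$ to show the expression is a constant multiple of $p^{|R|}(1-p)^{m-|R|}2^{\kappa(R)}$, and fix the constant by summing the $B=\emptyset$ case over $R$. This makes your argument self-contained: it proves the Grimmett--Janson coupling rather than assuming it (the lemma at $B=\emptyset$ is exactly that coupling). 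The paper's version is shorter because it takes the coupling as given.

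One slip in your closing remark. Your constant is $c=(2-p)^{-m}2^{m-n}Z_{\mathrm{RC}}/Z_0$, so $c=1$ gives $Z_{\mathrm{RC}}=(2-p)^{m}2^{n-m}Z_0$, not $(2-p)^{-m}2^{m-n}Z_0$. As a check, for a single edge $Z_0=1$ and $Z_{\mathrm{RC}}=4(1-p)+2p=2(2-p)$. This identity is never used, so the proof itself is unaffected.
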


\begin{proof}
Only sets $S\subseteq R$ contribute to the left-hand side of \eqref{eqn:RC:ES:lift}, and for every such $S$,
\begin{equation*}
  \rho^{|S|}K(S,R)=\rho^{|R|}(1-\rho)^{m-|R|},
\end{equation*}
which is independent of $S$.
Fix a component $C$ of $(V,R)$.
Since $S\subseteq R$, every edge of $S$ incident to a vertex of $V(C)$ lies in $E(C)$, so each $v\in V(C)$ has the same degree in $(V,S)$ as in the subgraph $(V(C),S\cap E(C))$.
Hence, $V(C)\cap\partial S$ is exactly the set of odd-degree vertices in $(V(C),S\cap E(C))$, and the handshaking lemma shows that $|V(C)\cap\partial S|$ is even.
If $\partial S=B$, then $V(C)\cap\partial S=V(C)\cap B$, so $|V(C)\cap B|$ must also be even.
Consequently, if some component of $(V,R)$ contains an odd number of vertices of $B$, no set $S\subseteq R$ can satisfy $\partial S=B$, and the left-hand side is zero.

Therefore the lemma follows when $R$ does not satisfy the parity condition.
Now fix $R\subseteq E$ such that every component of $(V,R)$ contains an even number of vertices of $B$.
We first show that the number of sets $S\subseteq R$ satisfying $\partial S=B$ is $2^{|R|-n+\kappa(R)}$, where $\kappa(R)$ is the number of components of $(V,R)$; in particular, this number does not depend on $B$.
This follows easily by considering the linear system over $\mathbb F_2$ given by the incidence matrix of $(V,R)$, where each edge variable is $1$ if the corresponding edge is in $S$ and $0$ otherwise.
The right-hand side of the system is the indicator vector of $B$.
The rank of this system is $n-\kappa(R)$.
The parity requirement guarantees that a solution exists, and the number of solutions is $2^{|R|-n+\kappa(R)}$.
See also \cite[Lemma 12]{GJ18}.

Since the same count holds for $B=\emptyset$ and every contributing summand is the same (recall that $\rho^{|S|}K(S,R)$ is independent of $S$),
\begin{align*}
  \frac{1}{Z_0}\sum_{S\in\Omega_B}\rho^{|S|}K(S,R)
  &=\frac{2^{|R|-n+\kappa(R)}\rho^{|R|}(1-\rho)^{m-|R|}}{Z_0}
  =\frac{1}{Z_0}\sum_{S\in\Omega_\emptyset}\rho^{|S|}K(S,R)\\
  &=\sum_{S\in\Omega_\emptyset}\nu_0(S)K(S,R)=\pi(R). \qedhere
\end{align*}

\end{proof}

The following is a direct corollary of \Cref{lem:RC:ES:lift}.
\begin{corollary}\label{cor:RC:sector:partition:bound}
For every $B\subseteq V$,
\begin{equation}\label{eqn:RC:sector:partition:bound}
  Z_B=Z_0\Pr_{R\sim\pi}\left(
    \begin{gathered}
      \text{every component of $(V,R)$ contains an even}\\
      \text{number of vertices of $B$}
    \end{gathered}
  \right)\leq Z_0.
\end{equation}
\end{corollary}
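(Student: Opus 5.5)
The plan is to sum the identity of \Cref{lem:RC:ES:lift} over all $R\subseteq E$. Two facts do the work: the lifting kernel is stochastic in its second argument, and the right-hand side of \eqref{eqn:RC:ES:lift} is a probability weight under $\pi$.

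First, for each fixed $S$, the kernel $K(S,\cdot)$ in \eqref{eqn:RC:lift:kernel} is the law of $R=S\cup T$, where $T$ contains each edge of $E\setminus S$ independently with probability $\rho$. Hence $\sum_{R\subseteq E}K(S,R)=1$. This is the binomial expansion $\sum_{T\subseteq E\setminus S}\rho^{|T|}(1-\rho)^{|E\setminus S|-|T|}=1$, using $m-|R|=|E\setminus S|-|T|$.

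Next, sum both sides of \eqref{eqn:RC:ES:lift} over $R\subseteq E$. The left-hand side becomes
\[
\frac{1}{Z_0}\sum_{S\in\Omega_B}\rho^{|S|}\sum_{R}K(S,R)=\frac{Z_B}{Z_0}.
\]
The right-hand side becomes $\sum_R\pi(R)\,\mathbf 1\{\cdot\}$, which is exactly the probability that $R\sim\pi$ satisfies the parity condition. Multiplying through by $Z_0$ gives the equality in \eqref{eqn:RC:sector:partition:bound}.

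Finally, the inequality $Z_B\le Z_0$ follows because a probability is at most $1$. There is no real obstacle here. The only point needing care is interchanging the finite sums and confirming that $K(S,\cdot)$ is normalized for every $S\in\Omega_B$, not only for even $S$; this holds because the normalization argument above never used the parity of $S$. As a consistency check, the case $B=\emptyset$ returns $Z_\emptyset=Z_0$, since the parity condition is then vacuous.
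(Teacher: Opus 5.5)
Your proposal is correct and follows the paper's proof exactly: sum the identity of the preceding lemma over $R\subseteq E$, use $\sum_{R}K(S,R)=1$ to get $Z_B/Z_0$ on the left and the parity probability under $\pi$ on the right, then bound the probability by $1$. Your explicit check that the normalization of $K(S,\cdot)$ holds for every $S$, regardless of parity, is a detail the paper uses without comment.
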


\begin{proof}
Summing the left-hand side of \eqref{eqn:RC:ES:lift} over $R\subseteq E$ gives
\begin{equation*}
  \frac{1}{Z_0}\sum_{S\in\Omega_B}\rho^{|S|}\sum_{R\subseteq E}K(S,R)=\frac{1}{Z_0}\sum_{S\in\Omega_B}\rho^{|S|}=\frac{Z_B}{Z_0}.
\end{equation*}
Here we used $\sum_{R\subseteq E}K(S,R)=1$.
Summing the right-hand side of \eqref{eqn:RC:ES:lift} gives the probability in \eqref{eqn:RC:sector:partition:bound}.
Multiplying by $Z_0$ proves the identity.
\end{proof}

The partition-function bound above will control the sums over $\Omega_B$ that arise when an even subgraph is changed one edge at a time.
Define
\begin{align*}
  \Omega_2:=\{S\subseteq E:|\partial S|=2\},\qquad
  \Omega_{\leq 2}:=\Omega_\emptyset\cup\Omega_2.
\end{align*}
Configurations in $\Omega_2$ are called near-even subgraphs, and $\Omega_{\leq 2}$ is the state space of the so-called worm process \cite{PS01}.
For edge sets $S,T\subseteq E$, let $S\oplus T$ be their symmetric difference.

Since $G$ is loopless, if $S\in\Omega_\emptyset$ and $e=uv$, then flipping $e$ changes the parity of the degrees of $u$ and $v$, so $\partial(S\oplus\{e\})=\{u,v\}$ and $S\oplus\{e\}\in\Omega_2$.
 In \cite{GJ18}, the worm state space is used to construct canonical paths, but here we only use it to express the covariance as a sum of local differences.

\begin{lemma}\label{lem:RC:worm:covariance}
Let $I\sim\nu_0$.
For every function $f:\Omega_{\leq 2}\to\mathbb R$,
\begin{equation}\label{eqn:RC:worm:covariance}
  \left|\operatorname{Cov}(f(I),|I|)\right|
  \leq\frac{1}{Z_0}
  \sum_{\substack{S\in\Omega_{\leq 2},\,e\in E\setminus S\\
      S\cup\{e\}\in\Omega_{\leq 2}}}
  \rho^{|S|+1}\left|f(S\cup\{e\})-f(S)\right|.
\end{equation}
\end{lemma}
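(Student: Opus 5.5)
Use the two-replica identity \eqref{eqn:two:replica:identity}. Let $I'$ be an independent copy of $I\sim\nu_0$. Then
\[
\operatorname{Cov}(f(I),|I|)=\tfrac12\mathbb E\bigl[(f(I)-f(I'))(|I|-|I'|)\bigr]=\frac{1}{2Z_0^2}\sum_{S,T\in\Omega_\emptyset}\rho^{|S|+|T|}(f(S)-f(T))(|S|-|T|).
\]
The plan is to write $f(S)-f(T)$ as a telescoping sum along a path from $T$ to $S$ that flips the edges of $S\oplus T$ one at a time. Every intermediate state should lie in $\Omega_{\leq 2}$. The factor $|S|-|T|$ is absorbed because $\bigl||S|-|T|\bigr|\le|S\oplus T|$, which is the length of that path. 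After exchanging the order of summation, the weight $\rho^{|S|+|T|}$ is reabsorbed into $\rho^{|U|+1}$ for the intermediate state $U$, times a factor of order $Z_0$.

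\textbf{Step 1 (paths).} Since $S\oplus T$ is even, it decomposes into edge-disjoint cycles $C_1,\dots,C_k$. Fix a canonical decomposition and orientation for each pair. Going from $T$ to $S$, process the cycles one at a time. Within a cycle, flip its edges consecutively along the cycle. Each partial state then has boundary either $\emptyset$ or the two endpoints of the flipped arc, so it lies in $\Omega_{\leq 2}$. Each flip either adds an edge $e$ to a state or removes one. Write each step as a pair $(U,U\cup\{e\})$ with $U\in\Omega_{\le2}$, where $U$ is the smaller set. Then $|f(S)-f(T)|\le\sum_{\text{steps}}|f(U\cup\{e\})-f(U)|$.

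\textbf{Step 2 (reweighting).} For a step $(U,U\cup\{e\})$ occurring on the path for $(S,T)$, note that $U\cup\{e\}$ agrees with $S$ or $T$ edgewise and flips the rest. Therefore $U\oplus(S\oplus T)$, adjusted by $e$, gives a complementary state $W$ with $|U|+|W|+1=|S|+|T|$. Here $W$ is $S$ and $T$ with the roles of the flipped and unflipped parts swapped, so $\partial W=\partial U$ or differs only near $e$. This yields $\rho^{|S|+|T|}=\rho^{|U|+1}\rho^{|W|}$. The map $(S,T)\mapsto(U,W)$ is injective once the step is fixed, because $S\oplus T=U\oplus W\oplus\{e\}$ is determined and $S\cap T$ is recovered as $U\cap W$.

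Summing over $W$ in the appropriate $\Omega_B$ with $|B|\le2$ and applying \Cref{cor:RC:sector:partition:bound} gives $\sum_W\rho^{|W|}\le Z_0$. Together with the prefactor $1/Z_0^2$, this produces the right-hand side of \eqref{eqn:RC:worm:covariance}.

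\textbf{Main obstacle.} The difficulty is the factor $|S|-|T|$ combined with the $\tfrac12$ and the path length, since I cannot afford a congestion factor. The trick I would try is to avoid bounding $|S|-|T|$ by the path length at all. Instead, write
\[
|S|-|T|=\sum_{\text{steps}}(\pm1),
\]
with $+1$ for each step that adds an edge. Also apply the telescoping sum only to $f$, and pair the two sums through a symmetrization over orderings, namely starting from each edge of the cycle. Alternatively, write
\[
\operatorname{Cov}(f(I),|I|)=\mathbb E[f(I)(|I|-\mathbb E|I|)]
\]
and use $|I|=\sum_e\mathbf 1\{e\in I\}$. Then the covariance is a sum over $e$ of $\rho$-weighted differences between states containing and not containing $e$. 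Each such term is compared via the same cycle-flip bijection, so that every $(U,e)$ is charged at most $Z_0\rho^{|U|+1}$ in total.

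Making this charging exactly one-to-one, with constant $1$, is the step I expect to be delicate. I would handle it by fixing a canonical cycle through $e$ in $S\oplus T$ and flipping along it starting at $e$.
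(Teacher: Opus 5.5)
Your proposal has a genuine gap at exactly the point you flag as delicate. Steps 1–2 are the standard canonical-path argument over all of $S\oplus T$. As you note, they leave the factor $\bigl||S|-|T|\bigr|\le|S\oplus T|$ as an uncontrolled congestion factor, so they only prove the inequality with an extra factor of up to $m$.

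Your proposed remedy is to write $|S|-|T|=\sum_{e\in D}\sigma_e$ and, for each $e$, flip only the canonical cycle $Q_e$ starting at $e$. This has the right shape, but it lacks the step that makes it legitimate. When $D=I\oplus J$ has several cycles, flipping $Q_e$ takes $I$ to $I\oplus Q_e\neq J$, so it does not telescope $f(I)-f(J)$. If you instead flip all of $D$ starting with $Q_e$, then once $Q_e$ is completed the current state no longer records which edge of $Q_e$ was $e$. Each step is then charged by up to $|Q_e|$ triples, and the congestion returns. ``Symmetrization over orderings'' or over starting edges does not repair this.

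The missing idea is to symmetrize over the pair $(I,J)$, not over orderings. On $\{(I,J)\in\Omega_\emptyset^2: I\oplus J=D\}$, both $(I,J)\mapsto(J,I)$ and $(I,J)\mapsto(I\oplus Q_e,J\oplus Q_e)$ are bijections that preserve $\rho^{|I|+|J|}$ and negate $\sigma_e$. Hence $\sum\rho^{|I|+|J|}\sigma_e f(J)=-\sum\rho^{|I|+|J|}\sigma_e f(I)=\sum\rho^{|I|+|J|}\sigma_e f(I\oplus Q_e)$. So inside the signed sum, before any absolute values are taken, $\sigma_e\bigl(f(I)-f(J)\bigr)$ may be replaced by $\sigma_e\bigl(f(I)-f(I\oplus Q_e)\bigr)$. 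Only then does flipping $Q_e$ in its fixed orientation, starting at $e$, suffice.

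The encoding is then exactly one-to-one on triples $(I,J,e)$. Let $S$ be the current state, $U:=S\oplus D$, and $a$ the next edge. Then $e=a$ if $\partial S=\emptyset$; otherwise $e$ is the edge of $Q_a$ leaving the vertex of $\partial S$ other than the tail of $a$. Your Step 2 justification, recovering $S\oplus T$ and $S\cap T$, would not by itself identify which edges of $D$ lie in $S$, nor $e$; it is the orientation and the start-at-$e$ convention that do this. Summing over $U$ with $\partial U=\partial S$ gives $\rho Z_0$ when $a\notin S$ and $Z_0$ when $a\in S$. Reindexing the second case produces the factor $2$ that cancels the $\tfrac12$ from the two-replica identity.
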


The proof routes the symmetric difference of two independent even subgraphs along edge-disjoint cycles, keeping every intermediate configuration in $\Omega_{\leq 2}$.
An injective encoding and \Cref{cor:RC:sector:partition:bound} then control the total weight assigned to each routed edge flip.

\begin{proof}
Let $J\sim\nu_0$ be an independent copy of $I$. 
Then, by \eqref{eqn:two:replica:identity},
\begin{align}
  \operatorname{Cov}(f(I),|I|) & = \frac{1}{2}\cdot\Ex[(f(I)-f(J))(|I|-\abs{J})]. \nonumber
\end{align}
Since $\nu_0(I)=\rho^{\abs{I}}/Z_0$,
we have
\begin{equation}\label{eqn:RC:two:replica:cardinality}
  T \defeq 2Z_0^2 \cdot \left|\operatorname{Cov}(f(I),|I|)\right|
  = \left|\sum_{(I,J)\in\Omega_\emptyset^2}\rho^{|I|+|J|}\bigl(f(I)-f(J)\bigr)\bigl(|I|-|J|\bigr)\right|.
\end{equation}
For an ordered pair $(I,J)$, let $D:=I\oplus J$.
For $e\in D$, set a signing $\sigma$ so that $\sigma_e:=1$ if $e\in I$ and $\sigma_e:=-1$ if $e\in J$. Note that $\sigma$ is fixed by $I$ and $J$.
Thus, $|I|-|J|=\sum_{e\in D}\sigma_e$.
Since $I$ and $J$ are even subgraphs, every vertex has even degree in $D$.
Thus, $D$ can be decomposed into edge-disjoint simple cycles.
Fix one such decomposition as a function of $D$ and orient each cycle.
Here we allow two parallel edges to form a cycle of length two.
For $e\in D$, let $Q_e$ be the oriented cycle containing $e$.

Fix $D$ and an edge $e\in D$.
Throughout the following calculation, all sums are over ordered pairs $(I,J)\in\Omega_\emptyset^2$ satisfying $I\oplus J=D$.
We claim that
\begin{align}\label{eqn:cycle:flip}
  \sum_{\substack{(I,J)\in\Omega_\emptyset^2\\I\oplus J=D}} \rho^{|I|+|J|}\sigma_e\cdot f(J)=-\sum_{\substack{(I,J)\in\Omega_\emptyset^2\\I\oplus J=D}} \rho^{|I|+|J|}\sigma_e\cdot f(I)=\sum_{\substack{(I,J)\in\Omega_\emptyset^2\\I\oplus J=D}} \rho^{|I|+|J|}\sigma_e\cdot f(I\oplus Q_e).
\end{align}
The first equality follows from swapping $I$ and $J$,
which preserves the summation range and the weight $\rho^{|I|+|J|}$ but reverses the sign of $\sigma_e$.
The second equality is obtained via the bijective map $(I,J)\mapsto(I\oplus Q_e,J\oplus Q_e)$, which preserves the summation range because flipping a cycle keeps both configurations even and $(I\oplus Q_e)\oplus(J\oplus Q_e)=D$.
Moreover, it merely moves the edges of $Q_e$ from one configuration to the other, so $|I|+|J|$ and the weight are unchanged.
Since $e\in Q_e$, it also reverses the sign of $\sigma_e$.

By \eqref{eqn:cycle:flip}, the sum containing $f(J)$ can be replaced by the sum containing $f(I\oplus Q_e)$.
Consequently, we have
\begin{equation}\label{eqn:RC:cycle:switching}
  \sum_{\substack{(I,J)\in\Omega_\emptyset^2\\I\oplus J=D}}
  \rho^{|I|+|J|}\sigma_e\cdot\bigl(f(I)-f(J)\bigr) =
  \sum_{\substack{(I,J)\in\Omega_\emptyset^2\\I\oplus J=D}}
  \rho^{|I|+|J|}\sigma_e\cdot
  \bigl(f(I)-f(I\oplus Q_e)\bigr).
\end{equation}

Consider the following process.
From $I$, flip only the edges of $Q_e$ one at a time in its orientation, starting from $e$, until reaching $I\oplus Q_e$.
Before any edge is flipped and after the entire cycle is flipped, the current configuration is even.
At every intermediate step, the flipped edges form a path, so the current configuration has exactly two odd-degree vertices.
Thus, every configuration encountered belongs to $\Omega_{\leq 2}$, and $f(I)-f(I\oplus Q_e)$ can be written as a telescoping sum of single-edge differences.

Consider one such single-edge flip.
Let $S$ be the current configuration.
Let $U$ be obtained from $J$ by flipping exactly the edges in $I\oplus S$.
In other words, $U=J\oplus(I\oplus S)=S\oplus D$.
Let $a$ be the next edge to be flipped. 
The triple $(S,U,a)$ uniquely determines the original pair $(I,J)$ and the starting edge $e$.
Indeed, given such a triple, we first recover $D=S\oplus U$.
Then, the fixed oriented cycle $Q_a$ is identified, which contains the starting edge $e$.
If $\partial S=\emptyset$, no edge of $Q_a$ has yet been flipped and the initial edge is $e=a$.
Otherwise, the tail of $a$ is one vertex of $\partial S$, while the other vertex in $\partial S$ is the tail of $e$.
Hence $e$ is the unique edge of $Q_a$ directed out of the other vertex.
The edges encountered from $e$ up to, but not including, $a$ are exactly those already flipped on $Q_a$.
Reversing these flips in $S$ and $U$ recovers $I$ and $J$.
Note that this step is very similar to the canonical path construction (and encoding) of Jerrum and Sinclair \cite{JS93}, except that we route the cycle difference along a fixed orientation rather than an arbitrary one.

Since $I$ and $J$ are even subgraphs and the same edges are flipped in both to obtain $S$ and $U$, $S$ and $U$ have the same odd-degree vertices.
Namely, $\partial S=\partial U$.
Moreover, the edge $a$ belongs to exactly one of $S$ and $U$, and $\rho^{|I|+|J|}=\rho^{|S|+|U|}$.
Let $\mathcal R$ be the set of all triples $(S,U,a)$ arising from the routed cycle differences above.
Using \eqref{eqn:RC:two:replica:cardinality}, we bound $T=2Z_0^2\cdot|\operatorname{Cov}(f(I),|I|)|$ by
\begin{equation*}
  \begin{aligned}
    T& = \left|\sum_{(I,J)\in\Omega_\emptyset^2}\rho^{|I|+|J|}\bigl(f(I)-f(J)\bigr)\bigl(|I|-|J|\bigr)\right| =\left|\sum_{D\in\Omega_\emptyset}\sum_{e\in D}\sum_{\substack{(I,J)\in\Omega_\emptyset^2\\I\oplus J=D}}\rho^{|I|+|J|}\sigma_e\cdot\bigl(f(I)-f(J)\bigr)\right|\\
    & =\left|\sum_{D\in\Omega_\emptyset}\sum_{e\in D}\sum_{\substack{(I,J)\in\Omega_\emptyset^2\\I\oplus J=D}}\rho^{|I|+|J|}\sigma_e\cdot\bigl(f(I)-f(I\oplus Q_e)\bigr)\right| \leq\sum_{(S,U,a)\in\mathcal R}\rho^{|S|+|U|}\left|f(S\oplus\{a\})-f(S)\right|,
  \end{aligned}
\end{equation*}
where the second equality uses $|I|-|J|=\sum_{e\in D}\sigma_e$, while the third follows from \eqref{eqn:RC:cycle:switching}.
The final inequality is obtained by routing each cycle difference into single-edge flips and applying the triangle inequality.

Grouping the triples in $\mathcal R$ by the pair $(S,a)$, we obtain
\begin{align*}
  \sum_{(S,U,a)\in\mathcal R}\rho^{|S|+|U|}\left|f(S\oplus\{a\})-f(S)\right|
  = \sum_{\substack{S\in\Omega_{\leq 2},\,a\in E\\S\oplus\{a\}\in\Omega_{\leq 2}}}\rho^{|S|}\left|f(S\oplus\{a\})-f(S)\right|
    \sum_{\substack{U\subseteq E\\(S,U,a)\in\mathcal R}}\rho^{|U|}.
\end{align*}
To bound the last sum, let $a=uv$, $B=\partial S$, and $B':=\partial(S\oplus\{a\})=B\oplus\{u,v\}$.
Every possible $U$ satisfies $\partial U=B$, and $a$ belongs to exactly one of $S$ and $U$.
Since all weights are nonnegative, we may enlarge the sum to all $U\in\Omega_B$ satisfying this condition on $a$.
We bound the $U$-sum in the two cases below using \eqref{eqn:RC:sector:partition:bound} in \Cref{cor:RC:sector:partition:bound}.
\begin{itemize}
  \item If $a\notin S$, then $a\in U$.
  Deleting $a=uv$ from $U$ decreases the degrees of $u$ and $v$ by one and leaves all other degrees unchanged. Hence $\partial(U\setminus\{a\})=B'$. Thus $U'=U\setminus\{a\}$ is a bijection from $\{U\in\Omega_B:a\in U\}$ to $\{U'\in\Omega_{B'}:a\notin U'\}$, with inverse $U=U'\cup\{a\}$. Since $|U|=|U'|+1$, we obtain
  \begin{equation}\label{eqn:RC:worm:U:added}
    \sum_{\substack{U\in\Omega_B\\a\in U}}\rho^{|U|}
    =\rho\sum_{\substack{U'\in\Omega_{B'}\\a\notin U'}}\rho^{|U'|}
    \leq\rho Z_{B'}\leq\rho Z_0.
  \end{equation}
  \item If $a\in S$, then $a\notin U$.
  {Dropping the restriction $a\notin U$ enlarges the sum to all of $\Omega_B$, whose total weight is $Z_B$. Hence}
  \begin{equation}\label{eqn:RC:worm:U:removed}
    \sum_{\substack{U\in\Omega_B\\a\notin U}}\rho^{|U|}
    \leq Z_B\leq Z_0.
  \end{equation}
\end{itemize}

Thus, we have
\begin{equation*}
  \begin{aligned}
    T\leq&\; Z_0\sum_{\substack{S\in\Omega_{\leq 2},\,a\in E\setminus S\\S\cup\{a\}\in\Omega_{\leq 2}}}\rho^{|S|+1}\left|f(S\cup\{a\})-f(S)\right| +Z_0\sum_{\substack{S\in\Omega_{\leq 2},\,a\in S\\S\setminus\{a\}\in\Omega_{\leq 2}}}\rho^{|S|}\left|f(S)-f(S\setminus\{a\})\right|\\
    =&\; 2Z_0\sum_{\substack{S\in\Omega_{\leq 2},\,a\in E\setminus S\\S\cup\{a\}\in\Omega_{\leq 2}}}\rho^{|S|+1}\left|f(S\cup\{a\})-f(S)\right|,
  \end{aligned}
\end{equation*}
where the inequality uses \eqref{eqn:RC:worm:U:added} and \eqref{eqn:RC:worm:U:removed}, and the last equality reindexes the second sum by $S\mapsto S\setminus\{a\}$.
This proves \eqref{eqn:RC:worm:covariance}.
\end{proof}

\subsection{Proof of the monotone edge-count Poincar\'e inequality}

Now we are ready to prove \Cref{lem:RC:monotone:Poincare}.

Sample $S\sim\nu_0$ and then lift it through $K(S,\cdot)$ to obtain $X\sim\pi$.
The law of total covariance \eqref{eqn:total:covariance} splits $\operatorname{Cov}(F,|X|)$ into the lift term $\mathbb E[\operatorname{Cov}(F,|X|\mid S)]$ and the even-subgraph term $\operatorname{Cov}(\mathbb E[F\mid S],\mathbb E[|X|\mid S])$.
The first term measures the randomness added when a fixed $S$ is lifted, while the second captures the variation of the even-subgraph sample $S$.
The independent lift decisions control the first term, while \Cref{lem:RC:worm:covariance} bounds the second.

\begin{proof}[Proof of \Cref{lem:RC:monotone:Poincare}]
For $e\in E$ and $A\subseteq E$, define
\begin{equation*}
  \nabla_e f(A):=f(A\cup\{e\})-f(A\setminus\{e\}).
\end{equation*}
Since $f$ is increasing, $\nabla_e f\geq0$.
Moreover, $\nabla_e f(A)=(\nabla_e f)(A\setminus\{e\})$,
so $\nabla_e f(A)$ depends only on $A\setminus\{e\}$.
For every $S\subseteq E$, also define the expected value of $f$ after lifting $S$ by
\begin{equation*}
  \Phi_f(S):=\sum_{A\subseteq E}K(S,A)f(A).
\end{equation*}
The function $\Phi_f$ is increasing: if $S\subseteq T$, couple the two lifts by using the same uniform random number for each edge.
The set lifted from $S$ is then a subset of the set lifted from $T$, and since $f$ is increasing, $\Phi_f(S)\leq\Phi_f(T)$.

Sample $S\sim\nu_0$ and then $X\sim K(S,\cdot)$.
By \Cref{prop:Grimmett:Janson:coupling}, $X\sim\pi$.
Moreover, $\mathbb E[F\mid S]=\Phi_f(S)$ and $\mathbb E[|X|\mid S]=|S|+\rho(m-|S|)=\rho m+(1-\rho)|S|$.
By the law of total covariance \eqref{eqn:total:covariance},
\begin{align}
    \operatorname{Cov}(F,|X|)
    &=\mathbb E\bigl[\operatorname{Cov}(F,|X|\mid S)\bigr]+\operatorname{Cov}\bigl(\mathbb E[F\mid S],\mathbb E[|X|\mid S]\bigr)\notag\\
    &=\mathbb E\bigl[\operatorname{Cov}(F,|X|\mid S)\bigr]+\operatorname{Cov}\bigl(\Phi_f(S),\rho m+(1-\rho)|S|\bigr)\notag\\
    &=\mathbb E\bigl[\operatorname{Cov}(F,|X|\mid S)\bigr]+(1-\rho)\operatorname{Cov}(\Phi_f(S),|S|),\label{eqn:RC:total:covariance}
\end{align}
where the constant term $\rho m$ has zero covariance with $\Phi_f(S)$.
Define
\begin{align}\label{eqn:C-def}
  C_{\mathrm{lift}}:=\mathbb E\bigl[\operatorname{Cov}(F,|X|\mid S)\bigr],\qquad C_{\mathrm{ES}}:=(1-\rho)\operatorname{Cov}(\Phi_f(S),|S|).
\end{align}

Thus, \eqref{eqn:RC:total:covariance} gives $\operatorname{Cov}(F,|X|)=C_{\mathrm{lift}}+C_{\mathrm{ES}}$.
The two terms are the lift and even-subgraph (or subgraph world) contributions, respectively. We bound each term separately.

First, we rewrite the term on the right-hand side of \eqref{eqn:RC:monotone:Poincare}.
Fix $e\in E$ and condition on $X_{-e}=X\setminus\{e\}$, the configuration on the remaining edges.
Let $\theta_e:=\Pr_\pi(e\in X\mid X_{-e})$.
This is the update probability of the Glauber dynamics and, as we discussed before, $\theta_e\in\{\rho,p\}$.
Conditioned on $X_{-e}$, the only randomness comes from the state of $e$.
Notice that
\begin{align*}
  F&=f(X)=f(X_{-e})+\mathbf 1\{e\in X\}\nabla_e f(X),\quad \text{ and } \quad \abs{X}=\abs{X_{-e}}+\mathbf 1\{e\in X\}.
\end{align*}
Then,
\begin{align}
  \operatorname{Cov}(F,|X|\mid X_{-e}) &= \operatorname{Cov}\bigl(f(X_{-e})+\mathbf 1\{e\in X\}\nabla_e f(X), |X_{-e}|+\mathbf 1\{e\in X\}\mid X_{-e}\bigr) \notag\\
  &= \operatorname{Cov}\bigl(\mathbf 1\{e\in X\}\nabla_e f(X), \mathbf 1\{e\in X\}\mid X_{-e}\bigr) \notag\\
  &= \theta_e(1-\theta_e)\nabla_e f(X),\label{eqn:RC:one:edge:covariance}
\end{align}
where the last equality holds because $\mathbf 1\{e\in X\}$ has conditional variance $\theta_e(1-\theta_e)$.

Now we bound the lift term.
Conditional on $S$, write $|X|=|S|+\sum_{e\notin S}\mathbf1\{e\in X\}$ and expand the covariance by linearity.
For each $e\notin S$, the indicator $\mathbf1\{e\in X\}$ is Bernoulli-$\rho$ and independent of $X_{-e}$ under the lift.
Since $F=f(X_{-e})+\mathbf1\{e\in X\}\nabla_e f(X)$ and $\nabla_e f(X)$ depends only on $X_{-e}$, this gives
\begin{equation}\label{eqn:RC:lift:indicator:covariance}
 \operatorname{Cov}\bigl(F,\mathbf1\{e\in X\}\mid S\bigr)
 =\rho(1-\rho)\mathbb E[\nabla_e f(X)\mid S].
\end{equation}
Summing \eqref{eqn:RC:lift:indicator:covariance} over $e\notin S$ and taking expectation over $S$, we obtain
\begin{equation*}
 C_{\mathrm{lift}}
 =\rho(1-\rho)\sum_{e\in E}
   \mathbb E\bigl[\mathbf1\{e\notin S\}\nabla_e f(X)\bigr]
 \leq\rho(1-\rho)\sum_{e\in E}\mathbb E_\pi[\nabla_e f(X)],
\end{equation*}
where the inequality follows from $\nabla_e f\geq0$.

To express this bound in terms of single-edge conditional covariances, we compare $\rho(1-\rho)$ with $\theta_e(1-\theta_e)$.
The heat-bath rule gives $\theta_e\in\{\rho,p\}$ for every $X_{-e}$.
If $\theta_e=\rho$, the two variances are equal.
If $\theta_e=p$, then $\rho=p/(2-p)$ implies
\begin{equation*}
 \frac{\rho(1-\rho)}{\theta_e(1-\theta_e)}
 =\frac{\rho(1-\rho)}{p(1-p)}
 =\frac{2}{(2-p)^2}\leq2.
\end{equation*}
Thus $\rho(1-\rho)\leq2\theta_e(1-\theta_e)$ in both cases.
Applying this comparison inside each expectation, using $\nabla_e f\geq0$, and then substituting \eqref{eqn:RC:one:edge:covariance} gives
\begin{equation}\label{eqn:RC:lift:bound}
 C_{\mathrm{lift}} \leq2\sum_{e\in E}\mathbb E_\pi\bigl[\theta_e(1-\theta_e)\nabla_e f(X)\bigr] =2\sum_{e\in E}\mathbb E_\pi\bigl[\operatorname{Cov}(F,|X|\mid X_{-e})\bigr].
\end{equation}
This matches the right-hand side of \eqref{eqn:RC:monotone:Poincare} with a coefficient of $2$ instead of $2n$.

We next bound the even-subgraph term.
Our main tool will be \eqref{eqn:RC:worm:covariance}.
For $e\notin S$, let $R\sim K(S,\cdot)$. Then $R\cup\{e\}\sim K(S\cup\{e\},\cdot)$.
Under $K(S,\cdot)$, the edge $e$ is added independently with probability $\rho$.
Hence,
\begin{align}
    \Phi_f(S\cup\{e\})-\Phi_f(S)&=\mathbb E\bigl[f(R\cup\{e\})-f(R)\bigr]=\mathbb E\bigl[\mathbf 1\{e\notin R\}\nabla_e f(R)\bigr]\notag\\
    &=\sum_{R\subseteq E:\;e\not\in R}K(S,R)\nabla_e f(R) =(1-\rho)\sum_{A\subseteq E}K(S,A)\nabla_e f(A),\label{eqn:RC:lift:gradient}
\end{align}
where we used the fact that $\nabla_e f(R)$ depends only on $R\setminus \{e\}$.
Since $\Phi_f$ is increasing, substituting \eqref{eqn:RC:lift:gradient} into \eqref{eqn:RC:worm:covariance} and using \eqref{eqn:C-def}, we obtain
\begin{align}
    C_{\mathrm{ES}}&\leq\frac{(1-\rho)^2}{Z_0}\sum_{\substack{S\in\Omega_{\leq 2},\,e\in E\setminus S\\S\cup\{e\}\in\Omega_{\leq 2}}}\rho^{|S|+1}\sum_{A\subseteq E}K(S,A)\nabla_e f(A)\notag\\
    &=\frac{(1-\rho)^2}{Z_0}\sum_{\substack{S\in\Omega_{\leq 2},\,e\in E\setminus S\\S\cup\{e\}\in\Omega_{\leq 2}}}\rho^{|S|}\sum_{\substack{A\subseteq E\\e\in A}}K(S,A)\nabla_e f(A)\notag\\
    &=\frac{(1-\rho)^2}{Z_0}\sum_{e\in E}\sum_{\substack{A\subseteq E\\e\in A}}\nabla_e f(A)\sum_{\substack{S\in\Omega_{\leq 2},\,e\notin S\\S\cup\{e\}\in\Omega_{\leq 2}}}\rho^{|S|}K(S,A).\label{eqn:RC:ES:before:count}
\end{align}

For fixed $e=uv$ and $A\subseteq E$ with $e\in A$, every $S$ in the inner sum of \eqref{eqn:RC:ES:before:count} must have $B:=\partial S$ satisfying $|B|,|B\oplus\{u,v\}|\in\{0,2\}$.
A two-element $B$ disjoint from $\{u,v\}$ is excluded because $|\partial(S \cup \{e\})| = 4$ and hence $S \cup \{e\} \notin \Omega_{\leq 2}$.
Thus, $B$ is $\emptyset$, $\{u,v\}$, $\{u,w\}$, or $\{v,w\}$ for some $w\in V\setminus\{u,v\}$.
There are therefore at most $2+2(n-2)=2n-2$ possible sets $B$.
Grouping the summands by $B=\partial S$ and enlarging each group to all of $\Omega_B$ yields
\begin{equation}\label{eqn:RC:ES:lift:density:bound}
  \begin{aligned}
    \frac{1}{Z_0}\sum_{\substack{S\in\Omega_{\leq 2},\,e\notin S\\S\cup\{e\}\in\Omega_{\leq 2}}}\rho^{|S|}K(S,A)
    &\leq\sum_{\substack{B\subseteq V\\|B|,\,|B\oplus\{u,v\}|\in\{0,2\}}}\frac{1}{Z_0}\sum_{S\in\Omega_B}\rho^{|S|}K(S,A)\\
    &\leq\sum_{\substack{B\subseteq V\\|B|,\,|B\oplus\{u,v\}|\in\{0,2\}}}\pi(A)
    \leq(2n-2)\pi(A).
  \end{aligned}
\end{equation}
The second inequality follows from \eqref{eqn:RC:ES:lift}, and the last from the count above.
Substituting \eqref{eqn:RC:ES:lift:density:bound} into \eqref{eqn:RC:ES:before:count}, 
we have
\begin{align*}
  C_{\mathrm{ES}}
    &\leq(2n-2)(1-\rho)^2\sum_{e\in E}\mathbb E_\pi\bigl[\mathbf 1\{e\in X\}\nabla_e f(X)\bigr].
\end{align*}
Recall that $\nabla_e f(X)$ depends only on $X_{-e}$.
Thus,
\begin{align*}
  \Ex_\pi\bigl[\mathbf 1\{e\in X\}\nabla_e f(X)\bigr] 
  &= \Ex_\pi\bigl[\nabla_e f(X)\Ex_\pi\bigl[\mathbf 1\{e\in X\}\mid X_{-e}\bigr]\bigr]\\
  & = \Ex_\pi\bigl[\theta_e\nabla_e f(X)\bigr].
\end{align*}
For $\theta_e\in\{\rho,p\}$, notice that $(1-\rho)^2\theta_e\leq\theta_e(1-\theta_e)$.

Then, by \eqref{eqn:RC:one:edge:covariance}, we obtain
\begin{equation}\label{eqn:RC:ES:bound}
  \begin{aligned}
    C_{\mathrm{ES}}
    &\leq(2n-2)(1-\rho)^2\sum_{e\in E}\mathbb E_\pi\bigl[\mathbf 1\{e\in X\}\nabla_e f(X)\bigr] =(2n-2)(1-\rho)^2\sum_{e\in E}\mathbb E_\pi\bigl[\theta_e\nabla_e f(X)\bigr]\\
    &\leq(2n-2)\sum_{e\in E}\mathbb E_\pi\bigl[\theta_e(1-\theta_e)\nabla_e f(X)\bigr] =(2n-2)\sum_{e\in E}\mathbb E_\pi\bigl[\operatorname{Cov}(F,|X|\mid X_{-e})\bigr].
  \end{aligned}
\end{equation}
Adding \eqref{eqn:RC:lift:bound} and \eqref{eqn:RC:ES:bound} and using \eqref{eqn:RC:total:covariance} proves \eqref{eqn:RC:monotone:Poincare}.
\end{proof}

\section{Proof of the main theorem}\label{sec:proof:main}
Finally, for completeness, we give the full algorithm and prove the main theorem.

\begin{proof}[Proof of \Cref{thm:main}]
We first apply \Cref{thm:Ising:sparsification} with error $\eps/2$.
This takes $O(m\log^2 n+n)$ time and gives $|V'|\leq n$ and $|E'|=O_\beta(n \log (2n / \eps))$.
If $E'=\emptyset$, since $G$ is connected, $G'$ consists of one vertex; assigning a uniformly random spin to it and applying the function $\Lift$ takes $O(n)$ time.  Hence, we may assume that $E'\neq\emptyset$ and $|V'|\geq 2$.

We next sample from the random-cluster model on $G'$ with $p:=1-1/\beta$ using the Glauber dynamics.  We start from the empty configuration and initialize the fully dynamic connectivity data structure with vertex set $V'$ and no open edge.  Suppose that the current configuration is $X$.
The Glauber dynamics draws a uniformly random edge $e=(u,v)\in E'$, updates $X\leftarrow X\setminus\{e\}$ if $e\in X$, and queries whether $u$ and $v$ are connected in $(V',X)$.  It then updates $X\leftarrow X\cup\{e\}$ with probability $p$ if they are connected and with probability $p/(2-p)$ otherwise.  Consequently, it suffices to support the following operations:
\begin{itemize}
  \item update $X\leftarrow X\cup\{e\}$ for any given $e\in E'$;
  \item update $X\leftarrow X\setminus\{e\}$ for any given $e\in E'$;
  \item query whether $u$ and $v$ are connected in $(V',X)$ for any given $u,v\in V'$.
\end{itemize}
Using the dynamic connectivity data structure in~\cite{WN13,CZ23}, each transition takes $O(\log^2 n)$ amortized time, with initialization taking at most $O((n+|E'|)\log^2 n)$ time.
To handle parallel edges, we maintain the number of open edges between each pair of adjacent vertices and apply the data structure to the underlying simple graph, updating it only when this number changes between zero and positive.
Grouping the edges by their endpoint pairs using a balanced search tree takes $O(|E'|\log n)$ time, and storing a pointer from each edge to its counter makes each subsequent counter update take $O(1)$ time.

By \Cref{thm:RC:Glauber:mixing}, an $\eps/2$-approximate random-cluster sample is obtained after
\[
  t=\left\lceil 2|V'||E'|\log\frac{4|E'|}{\eps}\right\rceil
  =O_\beta\bigl(n^2\log^2(n/\eps)\bigr)
\]
transitions, since $|E'|=O_\beta(n\log(n/\eps))$ and $\log(4|E'|/\eps)=O_\beta(\log(n/\eps))$.
Thus the sampling time, including initialization, is $O_\beta(n^2\log^2 n\,\log^2(n/\eps))$.
Applying the Edwards--Sokal coupling as in \Cref{prop:Edwards:Sokal}, followed by the function $\Lift$, takes $O(n+|E'|)$ time and does not increase the $\eps/2$ sampling error.
Together with the $\eps/2$ sparsification error, this gives the required error bound.
The total running time is therefore $O(m\log^2 n+n)+O_\beta(n^2\log^2 n\,\log^2(n/\eps))$.
If $G$ is simple, then $m=O(n^2)$, so the total running time is $O_\beta(n^2\log^2 n\,\log^2(n/\eps))$.
\end{proof}

\section{Concluding remarks} \label{sec:conclude}

It is natural to wonder what the true mixing time is for the random-cluster dynamics.
Although we believe the upper bound in \Cref{thm:RC:Glauber:mixing} can be improved, we conjecture that it cannot be improved to near-linear time.
The reason is as follows.
First, the closely related Swendsen--Wang dynamics mixes in time $\Theta\left(n^{1/4}\right)$ at the critical temperature for the mean-field case \cite{LNNP14}.
While this does not directly translate into a lower bound for the random-cluster dynamics, it is a strong sign that the RC dynamics does not mix in near-linear time.

Another piece of evidence is the conjectured or predicted behaviour of the Ising model on the 3D torus at the critical temperature.
Basically, the mixing time can be lower bounded by the inverse of the spectral gap, which, in turn, can be bounded using a variational characterisation analogous to that for the log-Sobolev constant.
Taking the size of the open edge set as the test function, the mixing time is lower bounded by the variance of the size of this set.
Using predictions of the critical Ising model on the 3D torus, this variance is super-linear.
See \cite[Section 4.2.1]{Dum22} for some rigorous background and \cite{HP98,KPSDV16} for some numerical results.

\bibliographystyle{amsalpha}
\bibliography{fast-RC}

\appendix

\section{Sparsification for all-terminal network reliability} \label{sec:reliability}

In this section we show that the same contraction sparsification in \Cref{sec:sparsification} also works for all-terminal network reliability.
This improves the previous running time from $\widetilde{O}(mn/\eps^2)$ \cite{CGZZ24} to $\widetilde{O}(m+n^2/\eps^2)$.

Let $G$ be a multigraph.
We may assume that $G$ is connected and loopless.
For $p\in(0,1)$, let $\operatorname{Rel}_p(G)$ denote the probability that $G_p$ is connected, where $G_p$ is the spanning subgraph obtained by deleting every edge independently with probability $p$.

\begin{theorem}[All-terminal reliability sparsification]
  \label{thm:reliability:sparsification}
  For every constant $p\in(0,1)$, there is an algorithm that, given a connected loopless multigraph $G=(V,E)$ and an error bound $\eta\in(0,1)$, constructs a multigraph $H$ via contraction in time $O(m\log^2 n+n)$ such that
  \begin{align*}
    |V(H)|\leq n,
    \quad\text{ and }\quad
    |E(H)|=O_p\bigl(n\log(n/\eta)\bigr),
  \end{align*}
  where $n:=|V|$ and $m:=|E|$, and
  \begin{align*}
    (1-\eta)\operatorname{Rel}_p(H)
    \leq \operatorname{Rel}_p(G)
    \leq \operatorname{Rel}_p(H).
  \end{align*}
  Consequently, for every $\eps\in(0,1)$, there is an FPRAS for $\operatorname{Rel}_p(G)$ with running time $\widetilde O_p(m+n^2/\eps^2)$, where $\widetilde O_p$ hides factors that are polylogarithmic in $n/\eps$ and constants that depend on $p$.
\end{theorem}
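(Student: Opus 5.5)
The plan is to reuse the contraction from the proof of \Cref{thm:Ising:sparsification} essentially verbatim, with the random-cluster measure replaced by independent bond percolation in which each edge is kept with probability $1-p$. First apply \Cref{prop:BK:tight:strength} to get tight strengths $\widetilde s_e$ with $\widetilde s_e\leq s_e$ and $\sum_e 1/\widetilde s_e\leq c_0 n$. Then set
\[
  K:=\left\lceil \frac{\log(2n^5/\eta)}{\log(1/p)}\right\rceil .
\]
Contract every connected component of $(V,E_{\geq K})$, where $E_{\geq K}:=\{e:\widetilde s_e\geq K\}$. Delete the resulting self-loops and keep all other edges with multiplicity; call the result $H$. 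The edge count bound $|E(H)|\leq Kc_0n=O_p(n\log(n/\eta))$ and the $O(m\log^2 n+n)$ running time follow exactly as before.

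For the reliability comparison, couple $G_p$ and $H_p$ by using the same edge deletions on the non-contracted edges. Contraction can only help connectivity: if $G_p$ is connected, then so is its image in $H$. This gives $\operatorname{Rel}_p(G)\leq\operatorname{Rel}_p(H)$.

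For the reverse inequality, let $\mathcal E$ be the event that every maximal $K$-strong component $G[B_\ell]$ remains connected in $G_p$. As shown in the proof of \Cref{thm:Ising:sparsification}, each contracted block $C_i$ lies inside some $B_\ell$. So on $\mathcal E$, each $C_i$ is internally connected in $G_p$ using edges of $G[B_\ell]$, and therefore $H_p$ being connected implies $G_p$ is connected. Hence
\[
  \operatorname{Rel}_p(G)\geq\Pr(\{H_p\text{ connected}\}\cap\mathcal E).
\]
Both events are increasing in the set of surviving edges, so Harris--FKG gives
\[
  \Pr(\{H_p\text{ connected}\}\cap\mathcal E)\geq\operatorname{Rel}_p(H)\Pr(\mathcal E).
\]
It remains to show $\Pr(\mathcal E)\geq1-\eta$. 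Here I would repeat the Karger cut-counting computation of \eqref{eqn:sparsification:cut:sum}, using \Cref{prop:Karger:cut:counting} with the closed-edge probability $p$ in place of $1-x$. The choice of $K$ gives $p^K\leq\eta/(2n^5)$, so each block fails with probability at most $\eta/n$, and a union bound over at most $n$ blocks finishes. If one wants to avoid FKG, one can instead note that $\mathcal E$ depends only on the contracted-region edges while $H_p$ depends on the other edges. Independence then gives the product bound directly, since every edge of $G[B_\ell]$ is either contracted-internal or an edge of $H$ inside the block. That second case is exactly why I would use FKG rather than independence.

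For the FPRAS, run the algorithm of \cite{CGZZ24} on $H$. Its running time is $\widetilde O(|E(H)|\,|V(H)|/\eps^2)=\widetilde O_p(n^2/\eps^2)$, and we take $\eta=\eps/2$ so that the factor $(1-\eta)$ is absorbed into the relative error. The only step requiring care is the lower bound, specifically checking that the correlation inequality applies. Everything else transfers mechanically from \Cref{sec:sparsification}.
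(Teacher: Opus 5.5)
Your proposal is correct and matches the paper's proof essentially step for step. Both use the same threshold $K=\lceil\log(2n^5/\eta)/\log(1/p)\rceil$, contract the components of $(V,E_{\geq K})$, and bound the event that every maximal $K$-strong block stays connected via Karger's cut counting plus a union bound; both then observe that this event together with connectivity of $H_p$ forces connectivity of $G_p$, and apply Harris's inequality to these two increasing events. The only difference is FPRAS bookkeeping: if you also run \cite{CGZZ24} at relative error $\eps/2$ with $\eta=\eps/2$, the ratio $(1+\eps/2)/(1-\eps/2)$ slightly exceeds $1+\eps$, so use the paper's split $\eta=\eps/4$ and relative error $\eps/4$.
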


\begin{proof}
Apply \Cref{prop:BK:tight:strength} to $G$ to obtain tight strengths $(\widetilde s_e)_{e\in E}$ satisfying \eqref{eqn:strength:estimates}, and let
\begin{align*}
  K:=\left\lceil
    \frac{\log(2n^5/\eta)}{\log(1/p)}
  \right\rceil.
\end{align*}
Let $E_{\geq K}:=\{e\in E:\widetilde s_e\geq K\}$, and let $C_1,\ldots,C_N$ be the connected components of $(V,E_{\geq K})$.
Construct $H$ by contracting each $C_i$ to a single vertex, deleting the resulting self-loops, and retaining every edge between distinct components with its multiplicity.
Every retained edge $e$ satisfies $\widetilde s_e<K$, and hence
\begin{align*}
  |V(H)|=N\leq n,
  \text{ and }
  |E(H)|
  \leq K\sum_{e\in E(H)}\frac{1}{\widetilde s_e}
  \leq K \cdot c_0 n
  =O_p\bigl(n\log(n/\eta)\bigr).
\end{align*}

As in the proof of \Cref{thm:Ising:sparsification}, let $B_1,\ldots,B_L$ be the inclusion-maximal $K$-strong vertex sets of $G$.
These sets partition $V$, and every $C_i$ is contained in a unique $B_\ell$.
Let $X\subseteq E$ be the random set of retained edges, and $G_{\ell}:=(B_\ell,X\cap E(G[B_\ell]))$ be the retained subgraph of $B_\ell$.
Let $\mathcal A$ be the event that $G_{\ell}$ is connected for every $\ell\in[L]$.
For a fixed $B_\ell$ with at least two vertices, the argument in \Cref{thm:Ising:sparsification}, via \Cref{prop:Karger:cut:counting}, gives
\begin{align*}
  \Pr\bigl(G_{\ell}\text{ is disconnected}\bigr)
  &\leq \sum_{j\geq1}n^{2(j+1)} \cdot p^{jK}=\frac{n^4p^K}{1-n^2p^K}.
\end{align*}
The choice of $K$ implies $p^K\leq\eta/(2n^5)$, so the last expression is at most $\eta/n$.
Since there are at most $n$ maximal $K$-strong components, a union bound yields $\Pr(\mathcal A)\geq1-\eta$.

Let $\mathcal C$ be the event that $(V,X)$ is connected, and let $\mathcal D$ be the event that the retained subgraph of $H$ is connected.
In other words, $\operatorname{Rel}_p(G)=\Pr(\mathcal C)$ and $\operatorname{Rel}_p(H)=\Pr(\mathcal D)$.
Contracting the sets $C_i$ shows that $\mathcal C\subseteq\mathcal D$.
Moreover, $\mathcal A\cap\mathcal D\subseteq\mathcal C$: identifying all vertices of $H$ whose corresponding blocks $C_i$ lie in the same $B_\ell$ gives a connected quotient, and given $\mathcal A$ every $B_\ell$ is internally connected in $(V,X)$.
Both $\mathcal A$ and $\mathcal D$ are increasing events under the product measure on $X$.
Therefore, Harris's inequality \cite{Har60} gives
\begin{align*}
  \operatorname{Rel}_p(G)
  &\geq\Pr(\mathcal A\cap\mathcal D)
  \geq\Pr(\mathcal A)\Pr(\mathcal D)
  \geq(1-\eta)\operatorname{Rel}_p(H).
\end{align*}
The inclusion $\mathcal C\subseteq\mathcal D$ gives the reverse inequality $\operatorname{Rel}_p(G)\leq\operatorname{Rel}_p(H)$.

The tight strengths can be computed in $O(m\log^2 n)$ time, and the components and contraction can be constructed in $O(m+n)$ additional time.
This proves the sparsification claim.

For the FPRAS, apply the sparsification with $\eta:=\eps/4$ and run the reliability FPRAS obtained from the near-linear sampler of Chen, Guo, Zhang, and Zou \cite{CGZZ24} on $H$ with relative error $\eps/4$.
If its output is $\widehat R$, then, with its stated success probability,
\begin{align*}
  (1-\eps/4)\operatorname{Rel}_p(H)
  \leq \widehat R
  \leq (1+\eps/4)\operatorname{Rel}_p(H).
\end{align*}
Together with the reliability comparison above, this implies that $\widehat R$ is a relative $\eps$-approximation to $\operatorname{Rel}_p(G)$.
The FPRAS of \cite{CGZZ24} runs in $\widetilde O_p(|V(H)||E(H)|/\eps^2)$ time.
Since $|V(H)|\leq n$ and $|E(H)|=O_p(n\log(n/\eps))$, the total running time, including sparsification, is $\widetilde O_p(m+n^2/\eps^2)$.
\end{proof}

\end{document}